\documentclass[10pt,journal]{IEEEtran}
\normalsize
\usepackage{amsthm,amssymb,amsmath,tikz,graphics,cite,float,graphicx,epstopdf,epsfig,verbatim,url}
\usetikzlibrary{automata}
\usetikzlibrary{shapes,arrows}
\newtheorem{lem}{Lemma}
\newtheorem{corol}{Corollary}
\usepackage[utf8]{inputenc}
\newtheorem{thm}{Theorem}

\newtheorem{rem}{Remark}

\newtheorem{mydef}{Definition}

\epstopdfsetup{prefersuffix=true}
\usepackage{capt-of}
\usepackage[noend]{algpseudocode}
\usepackage{algorithmicx}
\usepackage[ruled]{algorithm}

\tikzset{
  treenode/.style = {align=center, inner sep=0pt, text centered,
    font=\sffamily},
  arn_r/.style = {treenode, circle, black, draw=black, 
    text width=1.5em, very thick},
}
 
\begin{document}
\title{Throughput Analysis of Decentralized Coded Content Caching in  Cellular Networks}
\author{Mohsen~Karimzadeh~Kiskani$^{\dag}$,
        and Hamid~R.~Sadjadpour$^{\dag}$
\thanks{M. K. Kiskani$^{\dag}$ and H. R. Sadjadpour$^{\dag}$ 
are with the Department of Electrical Engineering, University of California, Santa Cruz. Email: 
\{mohsen, hamid\}@soe.ucsc.edu}}

\markboth{IEEE Transactions on Wireless Communications}{IEEE Transactions on Wireless Communications}

\maketitle \thispagestyle{empty}
\begin{abstract}
Decentralized coded content caching for next generation cellular networks is studied. The contents are linearly combined and cached in under-utilized caches of User Terminals (UTs) and its throughput capacity is compared with decentralized uncoded content caching. In both scenarios, we consider multihop Device-to-Device (D2D) communications and the use of femtocaches in the network. It is shown that decentralized coded content caching can increase the network throughput capacity compared to decentralized uncoded caching by reducing the number of hops needed to deliver the desired content. Further, the throughput capacity 
for Zipfian content request distribution is computed and it is shown that  the decentralized coded content cache placement can increase the throughput capacity of cellular networks by a factor of $(\log (n))^2$ where $n$ is the number of nodes served by a femtocache. 
\end{abstract}

\begin{IEEEkeywords}
Cellular Networks, Caching, 5G Networks, D2D communication, Decentralized Coded Caching
\end{IEEEkeywords}

\IEEEpeerreviewmaketitle

\section{Introduction}

\IEEEPARstart{R}{ecent} advances in storage technology  have made it possible for many consumer and user electronic products with  Terabyte of storage capability. Many researchers are  investigating the possibility of reusing this under-utilized storage capability to cache popular contents in order to improve the content delivery in cellular networks.  

In recent years, the problem of caching has been extensively studied. The fundamental limits of caching in broadcast channels is studied in \cite{maddah2014fundamental}. Other researchers  \cite{maddah2013decentralized,pedarsani2014online,hachem2014multi, karamchandani2014hierarchical} extended the results in \cite{maddah2014fundamental} for different scenarios in broadcast channels. The common features of all these studies are the assumptions that contents are cached without any coding and it is one hop communications. The authors in \cite{jeon2015caching,jeon2015wireless} analyzed the capacity of multihop networks but they still assumed contents are cached without any coding, i.e., uncoded caching. Further, these studies \cite{jeon2015caching,jeon2015wireless} focus on wireless ad hoc networks and there is no extension of the work to cellular networks.

In this paper, we propose a radically  different  cache placement approach. While in our proposed algorithm each UT caches independently of all other UTs in a decentralized manner, redundant caching is avoided by storing a random bitwise XOR combination of popular contents. We call this method {\em decentralized coded content cache placement} algorithm. This approach  increases the network throughput capacity and does not suffer from over-caching problem of uncoded caching.

The proposed coded caching  is fundamentally different from the notion of coded caching in references like
\cite{maddah2013decentralized,pedarsani2014online,hachem2014multi, karamchandani2014hierarchical}. In such papers, during the cache placement phase, only uncoded contents or uncoded parts of contents are stored in the caches. Later during the content delivery phase, the base station broadcasts coded contents (linear combination of multiple contents) to UTs such that they can decode their files simultaneously. We instead propose that during the cache placement phase, the contents are randomly combined and cached in UTs. 
It is shown that this  coded caching approach performs near optimal in terms of the average  number of hops to retrieve a content and hence, it can significantly increase the network throughput capacity. This makes 
the proposed coded cache placement very suitable in practical systems where UTs have small storage capability compared to the total number of contents in the network. 

Many studies propose to utilize  high bandwidth D2D communications  for UTs. Current IEEE 802.11ad standard \cite{ieee80211ad} and the millimeter-wave proposal for future 5G networks \cite{boccardi2014five,rappaport2013millimeter} are examples of such  high bandwidth D2D communications.
Authors in \cite{kiskani2015multihop} extended the solution in \cite{golrezaei2012femtocaching}
to deliver the contents from the helpers to the UTs using multihop D2D  communications.  However, \cite{kiskani2015multihop} only considers uncoded caching.

We study our approach within the framework of future cellular networks that use femtocaches (or helpers) \cite{golrezaei2012femtocaching}. In such networks, several {\em helpers} with high storage capabilities 
are deployed in each cell to create a distributed wireless caching infrastructure. Each helper is serving a
wireless ad hoc network of UTs through multihop D2D communications. We assume that helpers are connected to the base station through a high bandwidth backhaul infrastructure. For simplicity of our analysis, we assume that the contents have equal sizes. Our results are valid for contents with different sizes since
in practice each content can be divided into equal chunks. We prove that the proposed  decentralized coded content caching  increases the capacity of cellular networks by a factor of $(\log(n))^2$ compared to decentralized uncoded  caching. As far as we know, this is the first paper to propose the idea of decentralized coded content caching and to prove that  coded cache placement can  increase the network  capacity.

The rest of the paper is organized as follows. In section \ref{relwork}, the related work is discussed and  
section \ref{netmod} describes the network model considered in this paper. Section \ref{uncoded_sec} focuses on the computation of the throughput capacity of wireless cellular networks operating under a decentralized uncoded cache placement algorithm and  section \ref{coded_sec} reports the  capacity  of 
coded cache placement algorithm. In section \ref{zipfian_networks}, we compute the  capacity of networks operating under a Zipfian content request distribution. Simulation results are shown in section \ref{sim_sec}. Section \ref{disc_sec} compares this work with other coding schemes and the paper is concluded in section \ref{conc_sec}. 

\section{Related Work}
\label{relwork}
The original femtocache network model \cite{golrezaei2012femtocaching,shanmugam2013femtocaching} was focused on delivery of contents from femtocaches to UTs using single hop communications. The authors in \cite{kiskani2015multihop} considered a femtocaching  network with multihop D2D relaying of information from the helper to the UTs. A solution based on index coding was proposed in which the helper utilizes the side information in the UTs to create index codes which are then multicasted to the UTs. The approach  reduces bandwidth utilization by grouping multiple unicast transmissions into multicast transmission. 
While \cite{kiskani2015multihop} proposed a solution for the helpers to efficiently multicast the contents to the UTs, this paper assumes that the helper only unicasts the contents to the UTs. The UTs cache uncoded 
 contents in \cite{kiskani2015multihop} while in this paper a decentralized coded content caching solution for UTs is proposed.

Caching has  been a subject of interest to many researchers. The fundamental information theoretical
limits of caching is studied in \cite{maddah2014fundamental} where the authors studied the problem of caching in broadcast channels with a central uncoded cache placement algorithm. The authors in \cite{maddah2013decentralized} extended the work of \cite{maddah2014fundamental} to distributed uncoded cache placement approach and then broadcasting coded contents during the delivery phase over the shared link. They \cite{maddah2013decentralized} proposed  to break the contents into parts and then the UTs randomly cache the content parts during placement phase. In the delivery phase, coded contents are broadcasted from the server such that the UTs can decode their desired contents optimally. This paper proposes to randomly and independently combine contents and store them during cache placement phase.
During the delivery phase, a linear combination of encoded files is used to retrieve the requested 
content. The notion of coded caching in \cite{maddah2014fundamental} and all the papers that followed \cite{niesen2014coded,pedarsani2014online,karamchandani2014hierarchical,hachem2014multi}
refers to broadcasting coded contents during the delivery phase and it is not a cache placement technique. All prior works  \cite{maddah2014fundamental, maddah2013decentralized,pedarsani2014online,hachem2014multi,
karamchandani2014hierarchical, niesen2014coded} are fundamentally different from this paper as they are studying the information theoretic bounds of caching of a single server connected to  users through a shared link while this paper studies the scaling behavior of networks in which the UTs retrieve requested contents  through multiple hops.

Other papers  \cite{ji2014fundamental,ji2013wireless,jeon2015wireless, jeon2015caching} studied the problem of caching in wireless and D2D networks. Authors in \cite{ji2013wireless} discussed the fundamental capacity of D2D communication. In \cite{ji2014fundamental}, a single hop D2D caching system is studied from an information theoretic point of view. The authors in \cite{jeon2015caching,jeon2015wireless} have studied the capacity of multihop wireless D2D ad hoc networks with uncoded caching in certain regimes. However, our work is essentially different from \cite{jeon2015caching,jeon2015wireless} in the sense that the UTs in our current paper always request the contents from the helper while in \cite{jeon2015caching,jeon2015wireless}, a wireless ad hoc network is considered. Clearly, such network model requires higher overhead to locate the route to the requested content while in our approach, the request always is sent toward the helper. They find the capacity for specific regimes where the cache size is relatively very large compared to the number of UTs. In this paper, we prove that even constant cache size provides considerable capacity gain. Another major difference between our work and the references \cite{ji2014fundamental,ji2013wireless,jeon2015wireless, jeon2015caching} is the introduction of decentralized coded content caching which has not been studied in these works.

Caching coded contents has been previously suggested \cite{lee2015index,chen2014fundamental} as an efficient caching technique for devices with small storage capacity. In \cite{lee2015index}, the problem of index coding with coded side information is studied and \cite{chen2014fundamental} proposed a coded caching strategy for systems with small storage capacity. Our results demonstrate that apart from the practical importance of coded caching benefits for  small storage devices, it can also increase the throughput capacity of cellular networks. 

An earlier version of this paper was presented at \cite{kiskani2016capacity} which did not discuss the congestion problem and Zipfian content request distribution and details of some proofs are missing.

\section{Network Model}
\label{netmod}
In this paper, we  study the  capacity of cellular networks utilizing a distributed femtocaching infrastructure as proposed in \cite{golrezaei2012femtocaching}. We assume several helpers with high storage capabilities are deployed throughout the network to assist in delivering the contents through multiple hops to UTs.

For capacity analysis, we  use the deterministic routing approach proposed in \cite{kulkarni2004deterministic}. Without loss of generality, it is assumed that the UTs are distributed on a square of area one and the helper is located at the center serving $n$ UTs which are  randomly distributed on a  square. The square is divided into many square-lets of side length $c_1 s(n)$ where  $s(n) \triangleq \sqrt{\frac{\log (n)}{n}}$. It is proved  \cite{penrose1997longest} that this network is connected if nodes  have a transmission range of $\Theta \left(s(n)\right)$. When a UT  requests a content from the helper, the content is routed \cite{kulkarni2004deterministic} from the helper to the UT in a sequence of horizontal and vertical straight lines through square-lets which connect the helper to the UT.

A {\em Protocol Model} is considered \cite{xue2006scaling} for  successful communication between UTs. 
According to this model, if the UT $i$ is located at   $Y_i$, then a transmission from $i$ to  UT $j$ is
successful if $|Y_i-Y_j | < s(n)$ and for any other UT $k$ transmitting on the same frequency band, $|Y_k-Y_j| > (1 + \Delta)s(n)$ for a fixed guard zone factor $\Delta$. A Time Division Multiple Access (TDMA) scheme is assumed for the transmission between the square-lets. With the assumption of Protocol Model,
then the square-lets with a distance of $c_2=\frac{2+\Delta}{c_1}$ square-lets apart can transmit simultaneously without significant interference \cite{kulkarni2004deterministic}.

The results are computed in terms of scaling laws. We use \cite{knuth1976big} the following order notations. Denote $f(n)=\operatorname{O}(g(n))$ if there exist $c>0$ and $n_0>0$ such that $f(n)\leq c g(n)$ for all $n\geq n_0$,  $f(n)=\Omega(g(n))$ if $g(n)=\operatorname{O}(f(n))$, and $f(n)=\Theta(g(n))$ if $f(n)=\operatorname{O}(g(n))$ and $g(n)=\operatorname{O}(f(n))$.

The contents in the network  are  represented by a set  $\mathcal{X}=\{x_1,x_2,\dots,x_m\}$ and the set of 
their indices by $\xi=\{1,2,\dots,m\}$. Without loss of generality, we assume that the contents with lower indices are more popular than the ones with higher indices. The contents are categorized into two groups of
popular  and less popular contents.
\begin{mydef}
{\em 
Define the set of $h$ most popular contents as $\mathcal{X}_{h} = \{x_1,x_2,\dots,x_h\} \subseteq \mathcal{X}$ where $\xi_h = \{1,2,\dots,h\} \subseteq \xi$ denotes the set of indices of the most popular contents. 
} \label{def_popular}
\end{mydef}

The number of cached popular contents  during the cache placement phase, $h$, is decided by the cellular network designer based on different parameters and specifications of the network. The selection of $h$ is critical in the frequency of broadcast of unpopular contents by the base station. Typically $h$ is chosen large enough such that with a very low probability contents are broadcasted from the base station. We assume that $m$ and $h$ grow polynomially with $n$ similar to \cite{jeon2015caching,jeon2015wireless}. Since $h$ is a small fraction of $m$, we assume that $h$ is growing with $n$ in a much slower rate compared to $m$.  Section \ref{zipfian_networks} describes the necessary growth rate of $h$ to guarantee that the probability of requesting a content with index larger than $h$ decays polynomially with $n$ with a decay rate of $\rho$. The results are general in nature because by allowing  $h$ and $m$ scale with $n$ with different values of exponents, all possible values of $h$ and $m$ are considered.

The UTs have the same  cache  size of $M$ and $M < h$\footnote{Otherwise, the maximum throughput capacity
is trivially achievable by caching all the popular contents in each UT.}. There is no restriction on the cache size $M$ and $M$ is a constant or a function of $n$ as in \cite{jeon2015caching,jeon2015wireless}.
During the cache placement phase, UT caches are filled independently of other UTs. 

Helpers are assumed to store all the popular contents in $\mathcal{X}_h$. The popular content requests are served by D2D multihop communications and the less popular content requests are served by the base station through the low bandwidth shared link. The achievable throughput and network capacity are defined as follows.

\begin{mydef}{\em 
 A network throughput of $\lambda(n)$ contents per second for each UT is {\em achievable} if there is a scheme for scheduling transmissions in the D2D multihop network, such that every popular content request from $\mathcal{X}_h$ by every UT at a rate of $\lambda(n)$ can be served by the D2D multihop network.
 }\label{def_achievable}
\end{mydef}

\begin{mydef}{\em 
The {\em throughput capacity} of the network is lower bounded by $\Omega(g(n))$ contents per second if a deterministic constant $c_3 > 0$ exists such that 
\begin{equation}
   \lim_{n \to \infty} ~\mathbb{P}[\lambda(n) = c_3 g(n) ~\textrm{is achievable}~] = 1. 
 \label{lb_def_cap}
 \end{equation}
The network throughput capacity is upper bounded by $O(g(n))$ contents per second if a deterministic constant  $c_4 < + \infty$ exists such that 
\begin{equation}
 \liminf_{n \to \infty} ~\mathbb{P}[\lambda(n) = c_4 g(n) ~\textrm{is achievable}~] <1.
 \label{ub_def_cap}
\end{equation}
The network throughput capacity is of order $\Theta(g(n))$ contents per second if it is lower bounded by $\Omega(g(n))$ and upper bounded by $O(g(n))$.
 }\label{def_cap}
\end{mydef}

This paper assumes that the cache placement is already done and we want to study the throughput capacity during the content delivery phase. If the content can be decoded using the cached information in the intermediate relaying UT caches, then the UT does not need to receive the content from the helper. However, if the content cannot be decoded using the intermediate relays, then the content is received from the helper. 

To simplify the analysis, all contents are assumed of equal size with each having $Q$ bits. This is a reasonable assumption since in practice the contents are divided into equal-sized chunks. The minimum number of hops required to successfully decode any content is denoted by $Y$ with the average value of $Y$ 
 taken over all possible content requests denoted by $\mathbb{E}[Y]$. If the maximum achievable network throughput is $\lambda(n)$, then the network can deliver $n \lambda(n)$ contents per second or  equivalently, UTs can transmit $n \lambda (n) \mathbb{E}[Y] Q$ bits per second. There are exactly 
$\frac{1}{(c_2 c_1 s(n))^2}$ square-lets at any time slot available for transmission. The maximum number of bits that the network can deliver is equal to $\frac{W}{(c_2 c_1 s(n))^2}$ where $W$ is the total available bandwidth. Therefore,
\begin{equation}
\label{capa}
 \lambda(n) = \dfrac{W}{n  \mathbb{E}[Y] Q (c_2 c_1 s(n))^2} = \Theta 
 \left(\dfrac{1}{\mathbb{E}[Y] \log n} \right).
\end{equation}
Hence, to compute the maximum achievable network throughput, it is enough \cite{azimdoost2013} to 
find the average number of transmission hops needed to deliver the popular contents. Let's denote the requested content index by $r$, the probability of requesting $i^{th} $content by $f(i) = \mathbb{P}[r=i]$ and the cumulative probability function by $F(i) = \mathbb{P}[r \le i]$. This implies that 
\begin{equation}
 \mathbb{P}[r \in \xi_h] = \mathbb{P} [r \le h] = F(h).
 \label{ph_emn}
\end{equation}
With uncoded caching, if a UT $\mathcal{U}$ requests a content, the content is delivered to $\mathcal{U}$ either by the helper or by a relay that caches the content and is located on the routing path between $\mathcal{U}$ and the helper. With coded caching, if a group of coded contents  can be used to decode the content are available along the routing path between  $\mathcal{U}$ and the helper, then the helper informs the UTs the decoding instructions. If sufficient  coded files do not exist in the caches of the UTs between $\mathcal{U}$ and helper to decode the desired content, then the content is routed to $\mathcal{U}$ from the helper. Since the helper only stores the popular contents in $\xi_h$ and the less popular contents are
downloaded from the base station, then the average  number of traveled D2D hops in the network can be written as 
 \begin{eqnarray}
 \mathbb{E}[Y] = \mathbb{E}[Y | r \in \xi_{h}] \mathbb{P}[r \in \xi_{h}] = \mathbb{E}[Y | r \in \xi_{h}] F(h).
 \label{ex113}
\end{eqnarray}

For many web applications \cite{breslau1998implications,breslau1999web}, the  content request popularity follows Zipfian-like distributions. Although we  express our results in general form, we will later compute explicit capacity results assuming a Zipfian content popularity distribution. Our main result in proving the gain of coded caching over uncoded caching is independent of  the  content popularity distribution. 

For Zipfian popularity distribution with parameter $s$, the probability of requesting a content with popularity index $i$ is
\begin{equation}
 f(i)=\mathbb{P}[r = i] = \frac{i^{-s}}{\sum_{j=1}^m j^{-s}} = \frac{i^{-s}}{H_{m,s}},
 \label{mgfbf}
\end{equation}
where $H_{m,s}$ represents the generalized harmonic number with parameter $s$. The rest of the paper is dedicated to computing the throughput capacity of both decentralized uncoded and coded content caching including for special case of Zipfian content popularity distribution.

\section{Decentralized Uncoded Content Caching}
\label{uncoded_sec}
This section focuses on throughput analysis in cellular networks when each UT cache $M$ popular contents drawn uniformly at random independently of all other UTs. The uniform distribution of cache placement is different from the content request distribution by UTs. We will study the network throughput assuming a fixed cache placement. 
\begin{lem}
 {\em If a content is requested independently and uniformly at random from $h$ most popular contents in  $\mathcal{X}_{h}$, then the average required number of requests to have at least one copy of each content is equal to 
 \begin{equation}
  \mathbb{E}_{\textrm{coupon collector}} = h \sum_{i=1}^{h} \frac{1}{i} = h H_{h} 
  = \Theta(h \log(h)),
  \label{codhf}
 \end{equation}
 where $H_h = \Theta(\log(h))$ is the $h^{th}$ harmonic number.
 }
 \label{leme0}
\end{lem}
\begin{proof}{This is known  as the coupon collector problem \cite{erdHos1961classical}.}\end{proof}
\begin{lem}{\em
If each UT caches $M$ different contents uniformly at random during cache placement phase, then the average number of UTs required to have at least one copy of each content in the network is 
  \begin{align}
   \dfrac{h H_h}{d(h,M)} \le \mathbb{E}_{\textrm{uncoded}} \le 1+\dfrac{h H_h}{d(h,M)},
  \end{align}
where 
  \begin{equation}
  d(h,M) \triangleq \sum_{j=0}^{M-1} \frac{h}{h-j}.
  \label{eq_lem_batch}
 \end{equation}
}\label{lem_batch}
\end{lem}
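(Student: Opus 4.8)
The plan is to realize all the random caches through one underlying coupon-collector sequence and then apply Wald's identity, so that $d(h,M)$ emerges naturally as the ``cost'' of a single UT measured in elementary draws. First I would set up a coupling: let $c_1,c_2,\dots$ be i.i.d.\ uniform draws from $\mathcal{X}_h$, and cut this sequence into consecutive \emph{batches}, where a batch runs until exactly $M$ distinct values have appeared and the next batch then starts afresh. By symmetry of the content labels, the $M$ distinct values collected in a batch form a uniformly random $M$-subset of $\mathcal{X}_h$, and because distinct batches use disjoint blocks of i.i.d.\ draws, these subsets are independent. Hence batch $k$ has exactly the law of the cache of UT $k$, so $\mathbb{E}_{\textrm{uncoded}}$ equals the expected number of batches $T$ needed for the union of their caches to equal $\mathcal{X}_h$.

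The key structural observation I would establish next is that \emph{no distinct content is ever wasted inside a batch}: since a batch terminates precisely at the draw producing its $M$-th distinct value, every value that appears in a batch is cached by it. Consequently the union of the first $k$ caches equals the set of distinct values among $c_1,\dots,c_{\tau_k}$, where $\tau_k$ is the draw index ending batch $k$, and coverage after $k$ batches holds \emph{iff} all $h$ contents have been drawn by time $\tau_k$. Writing $N$ for the ordinary coupon-collector time of the sequence $(c_t)$, so that $\mathbb{E}[N]=hH_h$ by Lemma~\ref{leme0}, this identifies $T$ with the index $V$ of the batch containing the completing draw, i.e.\ $\tau_{V-1}<N\le\tau_V$.

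With this in hand both bounds follow from a single Wald computation. The per-batch draw counts $D_k=\tau_k-\tau_{k-1}$ are i.i.d.\ and, being the number of uniform draws needed to see $M$ distinct values, satisfy $\mathbb{E}[D_k]=\sum_{j=0}^{M-1}\frac{h}{h-j}=d(h,M)$; and $V$ is a stopping time for the batch filtration (with routine integrability), so Wald gives $\mathbb{E}[\tau_V]=d(h,M)\,\mathbb{E}[V]$. The lower bound is immediate from $N\le\tau_V$, which yields $hH_h=\mathbb{E}[N]\le d(h,M)\,\mathbb{E}[V]$. For the upper bound I would control the overshoot $\tau_V-N$: conditioned on everything up to the completing draw $N$, the current batch already holds some $j\ge 1$ distinct values, and the expected number of further draws needed to finish it is $\sum_{i=j}^{M-1}\frac{h}{h-i}\le d(h,M)$, since finishing a partially filled batch is never more expensive than filling an empty one. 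Hence $\mathbb{E}[\tau_V]\le\mathbb{E}[N]+d(h,M)=hH_h+d(h,M)$, and dividing by $d(h,M)$ gives $\mathbb{E}[V]\le 1+hH_h/d(h,M)$.

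I expect the main obstacles to be the two modeling points rather than the arithmetic: first, verifying rigorously that the batch construction reproduces \emph{independent} uniform $M$-subsets and that coverage of the caches coincides exactly with completion of the underlying collector (the ``no waste'' fact, which is what makes $T=V$ exact); and second, justifying the overshoot estimate via the memorylessness of the i.i.d.\ draws at the \emph{data-dependent} time $N$, since $N$ is correlated with the batch lengths and one must argue that conditioning on the past through $N$ leaves the future draws i.i.d. I would also record the sanity checks $M=1$ (recovering the pure coupon collector $hH_h$, as $d(h,1)=1$) and $M=h$ (giving the interval $[1,2]$, a single UT sufficing), which sit at the extremes of the claimed bounds.
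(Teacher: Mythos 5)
Your proposal is correct and takes essentially the same route as the paper: the paper's proof uses exactly your batch decomposition, realizing each UT's cache as a run of a single i.i.d.\ coupon-collector sequence until $M$ distinct contents appear, with $d(h,M)$ the expected draws per batch and $h H_h$ the expected draws to collect all contents. The only difference is that you make rigorous---via the independence/uniformity of the batch subsets, Wald's identity at the stopping batch $V$, and the overshoot bound $\mathbb{E}[\tau_V - N] \le d(h,M)$---the final division step that the paper performs informally when it asserts the answer lies between $h H_h / d(h,M)$ and $1 + h H_h / d(h,M)$.
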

\begin{proof}
This  is the extension of the coupon collector problem because the cached contents in each UT are different. To compute the average number of UTs in this case, we start from a classic coupon collector  problem. Assume that contents are chosen uniformly at random and as soon as $M$ different contents are found, they are cached in the first UT. Then the same process starts over for the next UT and after finding $M$ different contents, the contents are placed in the UT's cache. Assume that this process is repeated until  one copy of each content is cached in at least one UT's cache. Since this is a geometric distribution, on average we need $d(h,M)$ content requests to fill up one UTs cache with $M$ different contents. Based on Lemma \ref{leme0}, after an average  of $h H_h$ content requests, we  have requested
 one copy of all contents. Hence, the average number of UTs required to have one copy of each content in at least one UT cache is between $\dfrac{h H_h}{d(h,M)}$ and $1+\dfrac{h H_h}{d(h,M)}$.
\end{proof}

\begin{thm}{\em 
If $ h \log(h) = O(M {s(n)^{-1}})$\footnote{This condition means that the average number of hops needed to find the content is less than the number of hops to the helper. If this condition does not hold, then the content is sent by the helper to the requesting UT.}, then the average number of transmission hops to receive the contents in uncoded caching is equal to
  \begin{equation}
  \mathbb{E}[Y | r \in \xi_{h}] = 
  \Theta \left(\frac{h \log(h)}{M} \right).
  \label{ex_uncoded}
 \end{equation}
  \label{thm_uncoded}
}\end{thm}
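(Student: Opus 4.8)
The plan is to translate the hop count along a routing path into the coupon-collector count of Lemma \ref{lem_batch}. I would fix the requesting UT $\mathcal{U}$ together with the deterministic routing path joining $\mathcal{U}$ to the helper; as the request travels toward the helper it visits a sequence of relay UTs whose caches are, by construction, independent uniform $M$-subsets of $\xi_h$. In uncoded caching the content is served by the first relay on the path whose cache contains it, so $Y$ is governed by how many relays must be examined before the content is located. Because the scheme is required to serve \emph{every} popular request (Definition \ref{def_achievable}), I would control $Y$ by the number of relays whose caches jointly hold at least one copy of each of the $h$ popular contents --- which is exactly the quantity estimated in Lemma \ref{lem_batch}.

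First I would invoke Lemma \ref{lem_batch} to obtain that, in expectation, $\mathbb{E}_{\textrm{uncoded}} = \Theta\!\left(\frac{h H_h}{d(h,M)}\right)$ relays suffice to cover all popular contents. Then I would simplify the two factors: $H_h = \Theta(\log h)$ by Lemma \ref{leme0}, and, writing $d(h,M) = h\,(H_h - H_{h-M})$, the estimate $d(h,M) = \Theta(M)$ holds whenever $h-M = \Omega(h)$ (the relevant small-cache regime, since $M < h$), because each summand $\frac{h}{h-j}$ is $\Theta(1)$ for $j = O(h)$. Combining gives $\mathbb{E}_{\textrm{uncoded}} = \Theta\!\left(\frac{h\log h}{M}\right)$. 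For the upper bound on $Y$ I would note that once this many relays have been traversed, every content of $\xi_h$ --- and in particular the requested one --- is present, so the content is found within $\Theta(h\log h / M)$ hops; the hypothesis $h\log h = O(M\, s(n)^{-1})$ guarantees that this horizon is $O(s(n)^{-1})$, i.e. no larger than the hop-distance from $\mathcal{U}$ to the helper, so the relay search terminates before the request reaches the helper and $Y$ is set by the relay count rather than truncated at the helper. The matching lower bound would come from the coupon-collector lower bound: fewer than $\Omega(h\log h / M)$ relays cannot carry a copy of every popular content, so some popular request cannot be served within fewer hops.

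The main obstacle is making the passage between ``number of relays'' and ``expected number of hops'' rigorous, and in particular securing the \emph{lower} bound on $\mathbb{E}[Y \mid r\in\xi_h]$. A single fixed content is located, in expectation, after only $\Theta(h/M)$ relays --- the first-success time of a geometric variable with success probability $M/h$ --- so the extra $\log h$ factor is a consequence of demanding coverage of all $h$ possible requests rather than one typical request; I would have to argue carefully that the throughput requirement of serving every request in $\mathcal{X}_h$ is precisely what forces the relay horizon to equal the coupon-collector covering number of Lemma \ref{lem_batch}. The remaining checks are routine: converting the per-UT cache probability $M/h$ into the path geometry consistently (there are $\Theta(\log n)$ UTs per square-let and $\Theta(s(n)^{-1})$ square-lets to the helper), handling the conditioning on $r\in\xi_h$, and verifying the boundary case in which the covering horizon would exceed the helper distance, so that delivery defaults to the helper and the stated bound remains valid.
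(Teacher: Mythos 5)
Your proposal follows essentially the same route as the paper's proof: invoke Lemma \ref{lem_batch} for the batched coupon-collector count, then simplify via $H_h = \Theta(\log h)$ and $d(h,M)=\Theta(M)$ (the paper bounds $M \le d(h,M) \le \frac{Mh}{h-M+1}$ and assumes $h \gg M$, which is your $h-M=\Omega(h)$ regime), with the hypothesis $h\log(h) = O(M s(n)^{-1})$ used only to ensure the covering horizon fits within the path to the helper. The obstacle you flag --- that a single request is located after only $\Theta(h/M)$ relays, so identifying $\mathbb{E}[Y \mid r\in\xi_h]$ with the coupon-collector covering number $\Theta(h\log(h)/M)$ requires a further argument --- is a genuine soft spot, but the paper does not resolve it either: its proof simply asserts that Lemma \ref{lem_batch} gives ``the average number of UTs needed so that all of the requests can be satisfied'' and equates this with $\mathbb{E}[Y \mid r\in\xi_h]$, so your attempt reproduces the paper's argument, including an honest diagnosis of its least rigorous step.
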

\begin{proof}
 Lemma \ref{lem_batch} shows that the average number of UTs needed so that all of the requests can be satisfied is  
 \begin{align}
\dfrac{h H_h}{d(h,M)} \le \mathbb{E}[Y | r \in \xi_{h}] \le 1+\dfrac{h H_h}{d(h,M)}.  
 \end{align}
 Therefore, for large values of $h$, the average number of UTs required for finding any content scales as 
 \begin{align}
 \mathbb{E}[Y | r \in \xi_{h}] =  \Theta \left(\dfrac{h H_h}{d(h,M)} \right).
 \label{eqs_thm_1_2}
 \end{align}
 To find a bound for $d(h,M)$, notice that the series in the right hand side of equation \eqref{eq_lem_batch} has $M$ terms and the maximum and minimum values  are  $\frac{h}{h-M+1}$ and 1 respectively. Therefore, $d(h,M)$ lower and upper bounds are  
 \begin{equation}
  M \le d(h,M) \le \frac{M h}{h-M+1}.
  \label{eq_lab}
 \end{equation}
 Using equations \eqref{eqs_thm_1_2} and \eqref{eq_lab} we can find upper and lower bounds on 
 $\mathbb{E}[Y | r \in \xi_{h}]$ as 
 \begin{align}
  \mathbb{E}[Y | r \in \xi_{h}] &= O \left( \frac{h \log(h)}{M}\right), 
  \label{eqs_ub_thm1} \\
  \mathbb{E}[Y | r \in \xi_{h}] &= \Omega \left(\frac{h \log(h) (h-M+1)}{M h} \right). 
  \label{eqs_lb_thm1}
  \end{align}
Assuming $h>>M$, we have $h-M+1 \approx h$ and  the lower bound in \eqref{eqs_lb_thm1} becomes equal to the upper bound \eqref{eqs_ub_thm1} which proves the theorem.  
\end{proof}
 Figure \ref{fig_pffrk} shows that, each request can be satisfied by another UT on average $\Theta({h \log(h)}/{M})$ hops away provided that $ h \log(h) = O(M {s(n)^{-1}})$. Theorem \ref{thm_uncoded} and Equations \eqref{capa} and \eqref{ex113} can be used to prove the following corollary.

\begin{figure}[http]
\centering
\begin{tikzpicture}[->,>=stealth',shorten >=1pt,auto,node distance=1.5cm, semithick]
  \node[rectangle,draw]  (N0) {};
  \node[circle,draw]  (N1) [right of=N0] {};
  \node[circle,draw]  (N2) [right of=N1] {};
  \node[circle,draw]  (N3) [right of=N2] {};
  \node[circle,draw]  (N4) [right of=N3] {};
  \node[circle,draw]  (Nl) [right of=N4] {};
  \node (TT3)[above of = N0, yshift=-1cm] {\footnotesize{H}};
  \node (T3) [above of = N1, yshift=-1cm] {\footnotesize{UT$_l$}};
  \node (T4) [above of = N2, yshift=-1cm] {\footnotesize{UT$_{l-1}$}};
  \node (T5) [above of = Nl, yshift=-1cm] {\footnotesize{UT$_0$}};
  \node (T6) [above of = N3, yshift=-1cm] {\footnotesize{UT$_2$}};
  \node (T7) [above of = N4, yshift=-1cm] {\footnotesize{UT$_1$}};
  \path (N0)  [dashed] edge node {} (N1);
  \path (N1)   edge node {} (N2);
  \path (N2)  [dashed] edge node {} (N3);
  \path (N3)  edge node {} (N4);
  \path (N4)  edge node {} (Nl);
\end{tikzpicture}
\caption{UT$_0$ is requesting a content which is available in another UT $l$ hops away along the  path toward helper H. }
\label{fig_pffrk}
\end{figure}
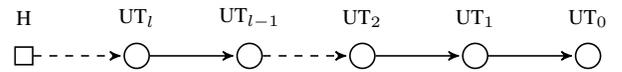
\begin{corol}{\em The throughput capacity of the decentralized uncoded content caching network with $ h \log(h) = O(M {s(n)^{-1}})$ is upper bounded by
    \begin{equation}{\color{black}
  \lambda_{\textrm{uncoded}}(n) = O \left( \frac{M}{h \log(h)  F(h) \log n} \right).
  }\label{capa_uncoded93}
 \end{equation}
The throughput capacity in equation \eqref{capa_uncoded93} is an upper bound and cannot be achieved because of congestion. The  achievable throughput capacity will be computed in the following section.
 \label{corol_uncoded56}
}\end{corol}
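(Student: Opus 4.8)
The plan is to obtain the bound by chaining together three facts already in hand: the bandwidth-counting relation \eqref{capa}, the conditioning decomposition \eqref{ex113}, and the conditional hop-count estimate of Theorem \ref{thm_uncoded}. First I would recall how \eqref{capa} arises: the network must transport $n\lambda(n)\mathbb{E}[Y]Q$ bits per second, while the total number of bits that all simultaneously active square-lets can carry is at most $W/(c_2 c_1 s(n))^2$. Equating these two quantities gives $\lambda(n)=\Theta\bigl(1/(\mathbb{E}[Y]\log n)\bigr)$. The key observation I would stress is that this is a \emph{necessary} condition, and therefore an upper bound on any achievable rate, since the counting argument optimistically assumes every active square-let carries useful traffic at the full rate $W$ with no idle time.

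Next I would substitute \eqref{ex113}, namely $\mathbb{E}[Y]=\mathbb{E}[Y\mid r\in\xi_h]\,F(h)$ (the factor $F(h)$ reflecting that unpopular requests are served by the base station over the shared link and contribute no D2D hops), to get
\begin{equation}
\lambda_{\textrm{uncoded}}(n)=O\left(\frac{1}{\mathbb{E}[Y\mid r\in\xi_h]\,F(h)\log n}\right).
\end{equation}
Inserting the estimate $\mathbb{E}[Y\mid r\in\xi_h]=\Theta\bigl(h\log(h)/M\bigr)$ from Theorem \ref{thm_uncoded}, which holds precisely under the hypothesis $h\log(h)=O\bigl(M\,s(n)^{-1}\bigr)$ so that the requested content is typically found among intermediate relays before the route reaches the helper, yields $\lambda_{\textrm{uncoded}}(n)=O\bigl(M/(h\log(h)\,F(h)\log n)\bigr)$ directly. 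This part is a routine substitution.

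The genuinely subtle point, and what I expect to be the main obstacle to promoting this to a $\Theta$-result, is justifying that the upper bound cannot in fact be attained. The bit-counting derivation of \eqref{capa} tacitly assumes the offered load is spread uniformly over the square-lets, but in this routing model every popular request is sent toward the \emph{single} helper at the center, so the square-lets adjacent to the helper must relay traffic on behalf of a disproportionately large fraction of the $n$ UTs. This hot-spot is invisible to the aggregate transport-capacity count yet forms a hard bottleneck, and it is exactly this congestion that prevents the bound from being achievable. I would therefore state the corollary as an $O$-bound and defer the analysis of the congestion-limited achievable rate to the next section, precisely as the paper announces.
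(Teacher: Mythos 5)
Your proposal is correct and follows essentially the same route as the paper: the corollary is obtained there precisely by combining the transport-capacity relation \eqref{capa} with the decomposition \eqref{ex113} and substituting the hop-count estimate of Theorem \ref{thm_uncoded}, with the congestion-based non-achievability likewise only asserted and deferred to the bottleneck analysis of Theorem \ref{thm_bottleneck1}. Your added remark that \eqref{capa} is a necessary condition (hence an upper bound) and that the hot-spot at the helper is invisible to the aggregate bit-counting argument is a faithful elaboration of the paper's reasoning rather than a departure from it.
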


\section{Decentralized Coded Content Caching}
\label{coded_sec}
The  throughput capacity of decentralized coded content caching is computed in this section. We propose a random coding strategy and prove that the throughput capacity of the network is increased compared to uncoded caching strategy. 

{\bf Coded cache placement:}  In this paper, we assume the files are binary and all operations are in GF(2). For each encoded file, the helper randomly selects each one of the contents from the set $\mathcal{X}_{h}$ with probability $\frac{1}{2}$ and then combines all the selected contents  (XOR) to create one encoded file. The encoded cached content at the $j^{th}$ cache location of UT $i$ can be represented as 
 \begin{equation}
  f_j^i = \sum_{k=1}^{h} a_k^{ij} x_k = {\bf v}_j^i \mathbf{X},
  \label{f_i}
 \end{equation}
 where $\mathbf{X}=[x_1 ~ x_2 ~ \dots ~ x_h]^T$ is a column vector containing all popular contents of set $\mathcal{X}_{h}$ and ${\bf v}_j^i$ is a  uniformly distributed encoding vector with binary elements and the summation is carried over GF(2). For a UT with cache size $M$, the helper creates $M$ such encoded files. Therefore, each one of the contents in $\mathcal{X}_{h}$ has been used on average $\frac{M}{2}$ times to create the $M$ coded files. We will represent the coded contents in caches as vectors belonging to $\mathbb{F}_2^{h}$.

{\bf Coded file reconstruction:} The UT sends the request for a content to the helper. The helper then decides to send the file through a routing path as proposed in \cite{kulkarni2004deterministic}. However, it is highly possible that the content can be reconstructed using a linear combination of some coded files in the caches of UTs between the requesting UT and the helper along the routing path. If these encoded files contain $h$ linearly independent encoded vectors, they can span the entire message space. As depicted in Figure \ref{fig_pffrk22}, UT$_i$ in the routing path can contribute up to $M$ linearly independent  vectors ${\bf v}_1^i,{\bf v}_2^i,\dots,{\bf v}_M^i$ for decoding  of a content. Therefore, UT$_i$ which is at most $q$ hops away from UT$_0$ on the routing path applies gain $b_j^i \in \textrm{GF}(2)$ to its $j$-th cache content and then passes it down to the next hop closer to UT$_0$. This process of relaying and clever use of the caching contents continues hop by hop until the file reaches the requesting UT. After the requesting UT receives $ (\sum_{i=1}^q \sum_{j=1}^M b^i_j {\bf v}^i_j) \mathbf{X}$, it can reconstruct its desired content by applying its own coding gains to construct $(\sum_{i=0}^q \sum_{j=1}^M b^i_j {\bf v}^i_j) = e_k$ where $e_k$ is a vector with all elements equal to zero except the $k^{th}$ element, i.e., $x_k$  is reconstructed.

In this distributed decoding scheme, each relay UT adds some encoded files to the received file and relay it to the next hop. The coefficients $b_j^i \in \{0,1\}$ are selected such that the linear combinations of encoded files produce the desired requested content. Each relay UT that participates in this distributed decoding operation, receives $M$ bits from helper in order to combine its cached encoded files. The computational complexity for each UT is modest since it only involves XOR operation. The following lemma computes the average number of vectors ${\bf v}_j^i$ to create $h$ linearly independent vectors.
\begin{figure}[http]
\centering
\begin{tikzpicture}[->,>=stealth',shorten >=1pt,auto,node distance=1.5cm, semithick]
  \node[rectangle,draw]  (N0) {};
  \node[circle,draw]  (N1) [right of=N0] {};
  \node[circle,draw]  (N2) [right of=N1] {};
  \node[circle,draw]  (N3) [right of=N2] {};
  \node[circle,draw]  (N4) [right of=N3] {};
  \node[circle,draw]  (Nl) [right of=N4] {};
  \node (TT3)[above of = N0, yshift=-1cm] {\footnotesize{H}};
  \node (T3) [above of = N1, yshift=-1cm] {\footnotesize{UT$_q$}};
  \node (T4) [above of = N2, yshift=-1cm] {\footnotesize{UT$_{q-1}$}};
  \node (T5) [above of = Nl, yshift=-1cm] {\footnotesize{UT$_0$}};
  \node (T6) [above of = N3, yshift=-1cm] {\footnotesize{UT$_2$}};
  \node (T7) [above of = N4, yshift=-1cm] {\footnotesize{UT$_1$}};
  \path (N0)  [dashed] edge node {} (N1);
  \path (N1)   edge node {} (N2);
  \path (N2)  [dashed] edge node {} (N3);
  \path (N3)  edge node {} (N4);
  \path (N4)  edge node {} (Nl);
\end{tikzpicture}
\caption{Each requested content by UT$_0$ is constructed by a linear combinations of the contents in $q+1$ UTs caches on the path between the helper and UT$_0$. }
 \label{fig_pffrk22}
\end{figure}
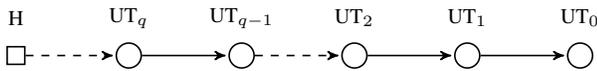  
\begin{lem}
 {\em 
 Let ${\bf v}^i_j$ be a random vector belonging to $\mathbb{F}_2^{{h}}$ with binary elements  with uniform distribution. The average number of vectors ${\bf v}^i_j$ to span the whole space of $\mathbb{F}_2^{{h}}$ equals  
 \begin{equation}
  \mathbb{E}_v = {h} + \sum_{i=1}^{h} \frac{1}{2^{i}-1}  =   {h} + \gamma,
\label{gamma}
 \end{equation}
 where $\gamma$ asymptotically approaches  the Erdős–Borwein constant ($\approx 1.6067$)\footnote{In our problem, $h$ is large enough that we can approximate the summation in \eqref{gamma} with asymptotic value.}.
 }
 \label{leme1}
\end{lem}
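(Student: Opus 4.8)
The plan is to track how the dimension of the span grows as the vectors ${\bf v}^i_j$ are drawn one at a time and to split the total count into the waiting times spent at each dimension. I would let $T_j$ denote the number of additional vectors drawn in order to raise the dimension of the current span from $j$ to $j+1$, for $j = 0, 1, \dots, h-1$. Writing $\tau_j$ for the index of the draw at which dimension $j$ is first attained, we have $T_j = \tau_{j+1} - \tau_j$, so the total number of vectors needed to span $\mathbb{F}_2^{h}$ is $\mathbb{E}_v = \sum_{j=0}^{h-1} \mathbb{E}[T_j]$ by linearity of expectation, with no double counting.

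The key observation is that whenever the current span is a subspace $U$ of dimension $j$, a freshly drawn uniform vector ${\bf v} \in \mathbb{F}_2^{h}$ fails to increase the dimension precisely when ${\bf v} \in U$, which occurs with probability $|U|/2^{h} = 2^{j}/2^{h} = 2^{j-h}$. Crucially, this probability depends only on the dimension $j$ and not on the identity of $U$. Hence, conditioned on having reached dimension $j$, the waiting time $T_j$ is geometric with success probability $1 - 2^{j-h}$, so that $\mathbb{E}[T_j] = (1 - 2^{j-h})^{-1}$ irrespective of the history. This is the step I expect to require the most care: one must invoke the tower property together with the fact that successive draws are i.i.d. uniform and therefore independent of the subspace accumulated so far, so that the conditional expectation is the same for every realization and passes to the unconditional one.

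Summing over $j$ and substituting $i = h - j$ then finishes the argument:
\begin{align*}
\mathbb{E}_v &= \sum_{j=0}^{h-1} \frac{1}{1 - 2^{j-h}} = \sum_{i=1}^{h} \frac{2^{i}}{2^{i} - 1} \\
&= h + \sum_{i=1}^{h} \frac{1}{2^{i} - 1},
\end{align*}
where the last equality uses $\frac{2^{i}}{2^{i}-1} = 1 + \frac{1}{2^{i}-1}$. This is exactly the claimed identity. To justify the stated asymptotics, I would finally note that the tail $\sum_{i=1}^{h} \frac{1}{2^{i}-1}$ is increasing in $h$ and bounded, since $\frac{1}{2^{i}-1} \le 2^{1-i}$ and $\sum_{i\ge 1} 2^{1-i} = 2$; it therefore converges, and its limit is the Erdős–Borwein constant $\gamma \approx 1.6067$, so that $\mathbb{E}_v = h + \gamma$ to within a vanishing error for large $h$.
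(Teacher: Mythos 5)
Your proof is correct. It is built on the same underlying observation as the paper's proof---the dimension of the accumulated span evolves as a pure-birth Markov chain on $\{0,1,\dots,h\}$, because a fresh uniform vector falls inside the current $j$-dimensional span with probability $2^{j-h}$, independent of which subspace it is---but your execution is genuinely more elementary. The paper formalizes the process as a discrete phase-type distribution, writes the transition matrix $P$ in the block form $\begin{bmatrix} T & T_0 \\ \underline{0} & 1\end{bmatrix}$, and computes the expected absorption time as the first row sum of the fundamental matrix $U=(I-T)^{-1}$, which requires exhibiting the inverse explicitly. You instead decompose the absorption time directly into the sojourn times $T_j$ at each dimension, observe each is geometric with success probability $1-2^{j-h}$ (your tower-property remark is exactly the justification needed, since the conditional expectation is constant over all histories reaching dimension $j$), and apply linearity of expectation to get $\sum_{j=0}^{h-1}(1-2^{j-h})^{-1}=\sum_{i=1}^{h}\frac{2^i}{2^i-1}$ with no matrix algebra at all. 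What the paper's heavier machinery buys is the full matrix $U$ of expected visit counts $\mathbb{E}[t_j^i]$ from arbitrary starting states, not just the expectation from state $0$; what your route buys is brevity and transparency, exploiting that the chain is monotone so the expected hitting time splits cleanly into independent geometric pieces. Your closing bound $\frac{1}{2^i-1}\le 2^{1-i}$ also supplies the convergence of the tail to the Erdős--Borwein constant, which the paper asserts in a footnote without argument.
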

\begin{proof} 
 We use a Markov chain to model the problem. The states of this Markov chain are equal to the dimension of the space spanned by vectors\footnote{For the rest of lemma, we drop superscript from notations where it is obvious.} $v_1,v_2,\dots,v_l$. Let $k_l$ ($k_l \leq h$) denote the dimension of the space spanned by these vectors. Therefore, the Markov chain will have $k_l+1$ distinct states. Assuming that we are in state $k_l$, we want to find the probability that adding a new vector will change the state to $k_l+1$. When we are in state $k_l$, $2^{k_l}$ vectors out of $2^{h}$ possible vectors will not change the dimension while adding any one of $2^{h} - 2^{k_l}$ new vectors will change the dimension to $k_l+1$. Therefore, the Markov chain can be represented by the one in Figure \ref{markovchain}. 
 \begin{figure}
\begin{center}
\begin{tikzpicture}[->,>=stealth',auto,semithick,node distance=1.8cm]
\tikzstyle{every state}=[fill=white,draw=black,thick,text=black,scale=1]
\node[state]    (k0)                {};
\node[state]    (k1)[right of=k0]   {};
\node[state]    (k2)[right of=k1]   {};
\node[state]    (k3)[right of=k2]   {};
\node[state]    (kn)[right of=k3]   {};
\node [label={[label distance=0.5cm]$k_l=0$}] (t0)[below of=k0]{};
\node [label={[label distance=0.5cm]$k_l=1$}]   (t1)[below of=k1]   {};
\node [label={[label distance=0.5cm]$k_l=2$}]   (t2)[below of=k2]   {};
\node [label={[label distance=0.5cm]$k_l=3$}]   (t3)[below of=k3]   {};
\node [label={[label distance=0.5cm]$k_l={h}$}]   (tn)[below of=kn]   {};
\path (k0)    edge[loop above]    node{$\frac{1}{2^{h}}$}      (k0);
\path (k0)    edge[above]    node{$1-\frac{1}{2^{h}}$}    (k1);
\path (k1)    edge[loop above]    node{$\frac{2}{2^{h}}$}      (k1);
\path (k1)    edge[above]    node{$1-\frac{2}{2^{h}}$}    (k2);
\path (k2)    edge[loop above]    node{$\frac{2^2}{2^{h}}$}    (k2);
\path (k2)    edge[above]    node{$1-\frac{2^2}{2^{h}}$}  (k3);
\path (k3)    edge[loop above]    node{$\frac{2^3}{2^{h}}$}    (k3);
\path (k3)    edge[dashed]        node{}                     (kn);
\path (kn)    edge[loop above]    node{$1$}                  (kn);
\end{tikzpicture}
\end{center}
\caption{The state space of the Markov chain used in proof of Lemma \ref{leme1}.}
\label{markovchain}
\end{figure}
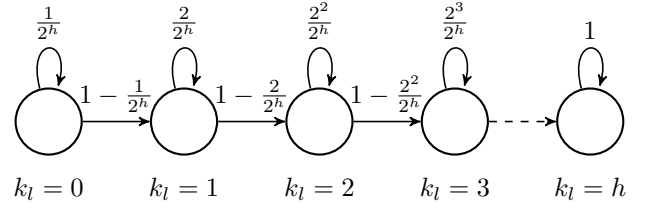 
 The state transition matrix for this Markov chain is 
\begin{equation} 
P=\begin{bmatrix} 
\frac{1}{2^{h}} & 1- \frac{1}{2^{h}} & 0                & \cdots & 0         &0        \\ 
0             & \frac{2}{2^{h}}    & 1- \frac{2}{2^{h}} & \cdots & 0         &0        \\ 
0             &   0              & \frac{2^2}{2^{h}}    & \cdots & 0         &0        \\ 
\vdots        & \vdots           & \vdots           & \ddots & \vdots    & \vdots \\
0             & 0                & 0                & \cdots           & 1- \frac{2^{{h}-2}}{2^{h}} & 0\\ 
0             & 0                & 0                & \cdots           &\frac{2^{{h}-1}}{2^{h}} & 1-\frac{2^{{h}-1}}{2^{h}}\\
0&0&0&\cdots&0& 1\\ \end{bmatrix}, \nonumber 
\end{equation}
which can be written in the form of a discrete phase-type distribution as
\begin{equation} 
P=\begin{bmatrix} 
T & T_0  \\ 
\underline{0} & 1 \\ \end{bmatrix},
\label{msdbf}
\end{equation}
where
\begin{equation} 
T=\begin{bmatrix} 
\frac{1}{2^{h}} & 1- \frac{1}{2^{h}} & 0                & \cdots & 0         &0               \\ 
0             & \frac{2}{2^{h}}    & 1- \frac{2}{2^{h}} & \cdots & 0         &0                \\ 
0             &   0              & \frac{4}{2^{h}}    & \cdots & 0         &0                 \\ 
\vdots        & \vdots           & \vdots           & \ddots & \vdots    &\vdots   \\
0             & 0                & 0                & \cdots           & \frac{2^{{h}-2}}{2^{h}} & 1- \frac{2^{{h}-2}}{2^{h}} \\ 
0             & 0                & 0                & \cdots           & 0         &\frac{2^{{h}-1}}{2^{h}}  \\
\end{bmatrix},
\end{equation}
and
\begin{equation} 
T_0=\begin{bmatrix} 
0        \\ 
\vdots \\
0\\ 
1-\frac{2^{{h}-1}}{2^{h}}\\
\end{bmatrix}.
\end{equation}
If $e$ denotes  all one vector of size ${h}$, since $P$ is a probability distribution we  have 
\begin{equation}
P\begin{bmatrix}e \\ 1 \end{bmatrix}=\begin{bmatrix}e \\ 1 \end{bmatrix}.
\label{siiiin}
\end{equation}
This 
implies that $Te + T_0 = e$, hence $ T_0 =  (I - T)e$. Therefore, it is easy to show by induction that the state transition matrix in $l$ steps can be written as
\begin{equation} 
P^l=\begin{bmatrix} 
T^l & (I-T^l)e  \\ 
0 & 1 \\ \end{bmatrix}.
\label{mskit}
\end{equation}
This equation implies that if we define the absorption  time\footnote{In our problem, absorption time is actually the total number of required vectors in relays to span the $h$-dimensional space.} as
\begin{equation}
t_a = \inf \{l \ge 1 ~|~ k_l={h}\},
\end{equation}
and if $l$ is strictly less than the absorption time, the probability of transitioning from state $i$ to state $j$ by having $l$ new vectors can be computed from   $T^l$. In other words, 
\begin{equation}
 \mathrm{P}_i^l[k_l = j,  l < t_a] = (T^l)_{ij}.
 \label{probres}
\end{equation}
Therefore, starting from state $i$, if $t_j^i$ denotes the number of vectors (i.e., encoded files for our problem) to transition from state $j$ before absorption, $t_j^i$ can be written as 
\begin{equation}
t_j^i = \sum_{l=0}^{t_a-1} \mathrm{1}\{k_l=j\}.
\label{msdhlsdjfusdg} 
\end{equation} 
The average value of $t_j^i$ is  
\begin{eqnarray}
 \mathbb{E}[t_j^i] = \mathbb{E} \left[\sum_{l=0}^{t_a-1}
 \mathrm{1}\{k_l=j\} \right] = \sum_{l=0}^{t_a-1} \mathbb{E} \left[ \mathrm{1}\{k_l=j\} \right]. 
 \label{expected1}
\end{eqnarray}
Since $\mathbb{E} \left[ \mathrm{1}\{k_l=j\} \right] = 
\mathrm{P}_i^l(k_l = j, l \le t_a-1) $, we have 
\begin{eqnarray}
\mathbb{E}[t_j^i] &=& \sum_{l=0}^{t_a-1} \mathrm{P}_i^l(k_l = j, l \le t_a-1)  \nonumber \\
 &\stackrel{a}{=}& \sum_{l=0}^{\infty} \mathrm{P}_i^l(k_l = j, l < t_a)  \nonumber \\
 &\stackrel{b}{=}& \sum_{l=0}^{\infty} (T^l)_{ij}.
 \label{expected2}
\end{eqnarray}
Equality ($a$) is correct because the probability is nonzero up to $l=t_a-1$ terms and ($b$) is derived from equation \eqref{probres}. If we denote matrix $U=(\mathbb{E}[t_j^i])_{ij}$, by using equation \eqref{expected1} and  matrix algebra, we have 
\begin{equation}
U = \sum_{i=0}^{\infty}T^i =(I-T)^{-1}.
\end{equation}
It is not difficult to verify that 
\begin{equation}
\nonumber \\
 U=(I-T)^{-1}=
 \begin{bmatrix}
 \frac{2^{h}}{2^{h}-1} & \frac{2^{{h}-1}}{2^{{h}-1}-1} &  \frac{2^{{h}-2}}{2^{{h}-2}-1}& \cdots &2 \\ 
 0 & \frac{2^{{h}-1}}{2^{{h}-1}-1} & \frac{2^{{h}-2}}{2^{{h}-2}-1}& \cdots&2\\ 
 0 & 0 & \frac{2^{{h}-2}}{2^{{h}-2}-1}&  \cdots&2\\ 
 \vdots & \vdots & \vdots & \ddots & \vdots \\ 
 0  & 0 & 0 & \cdots &  2
\end{bmatrix}.
\end{equation}
We are interested in computing the average number of vectors ($t_a$) to get to absorption starting from $k_l=0$. Hence,
\begin{eqnarray} 
\mathbb{E}_v = \mathbb{E}[t_a] &=& 
\begin{bmatrix}1 & 0 & \cdots & 0\end{bmatrix} U e \nonumber \\
&=& \begin{bmatrix}1 & 0 & \cdots & 0\end{bmatrix} (I-T)^{-1} e \nonumber \\
&=& \sum_{i=1}^{h} \frac{2^i}{2^i-1} ={h} + \sum_{i=1}^{h} \frac{1}{2^i-1} 
\end{eqnarray} 
This proves the lemma. 
 \label{prof1}
\end{proof}
\begin{rem}{\em
The optimal number of linearly independent vectors to span the  vector space is $h$. Our decentralized coded content caching strategy only requires  $h + \gamma$ coded contents to span the vector space. Since $\gamma$ is considerably smaller than $h$, then our approach provides close to optimal performance in terms of the minimum required number of caches.
 }\label{rem_opt}
\end{rem}
Lemma \ref{leme1} suggests that each UT's request can be satisfied in a smaller number of hops compared to an uncoded caching strategy. This shows that our proposed decentralized coded content caching scheme is capable of removing the inherent over-caching problem in decentralized uncoded content caching. The following theorem formalizes the result. 
\begin{thm}{\em 
If the number of popular contents is upper bounded by $M s(n)^{-1}$ i.e. if $ h = O(M {s(n)^{-1}})$, then our proposed decentralized coded content caching technique reduces the average  number of hops for decoding a content to 
  \begin{equation}
  \mathbb{E}[Y | r \in \xi_h] = 
  \Theta \left( \frac{h}{M}\right).
  \label{ex_coded}
 \end{equation}
  \label{thm_coded}
}\end{thm}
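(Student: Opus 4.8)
The plan is to mirror the argument of Theorem \ref{thm_uncoded}, but to replace the coupon-collector count of Lemma \ref{lem_batch} with the linear-algebraic spanning count of Lemma \ref{leme1}. The key reduction is that a requested content $x_k$ can be decoded from the caches along the routing path precisely when the basis vector $e_k$ lies in the span of the encoding vectors $\mathbf{v}_j^i$ pooled from those caches; in particular, once these vectors span the entire message space $\mathbb{F}_2^h$, every content — and hence the requested one — is reconstructible. Thus it suffices to count how many relay UTs must be traversed before their pooled encoding vectors span $\mathbb{F}_2^h$.

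First I would invoke Lemma \ref{leme1}, which gives that the expected number of uniform vectors in $\mathbb{F}_2^h$ needed to span the whole space is $\mathbb{E}_v = h + \gamma$, with $\gamma$ approaching the Erdős–Borwein constant. Since each UT on the path contributes $M$ such encoding vectors, accumulating $\mathbb{E}_v$ vectors corresponds to traversing $\mathbb{E}_v/M$ UTs. Because every UT supplies a full batch of $M$ vectors, converting the vector count into a UT count by a ceiling costs at most one extra hop (using $\tfrac{T}{M}\le\lceil\tfrac{T}{M}\rceil\le\tfrac{T}{M}+1$ and taking expectations), which yields
\begin{align}
\frac{h + \gamma}{M} \le \mathbb{E}[Y \mid r \in \xi_h] \le 1 + \frac{h + \gamma}{M}.
\end{align}
As $\gamma$ is a fixed constant and the additive $1$ is negligible relative to $h/M$ for large $h$, both bounds collapse to $\Theta(h/M)$, establishing \eqref{ex_coded}.

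The role of the hypothesis $h = O(M\,s(n)^{-1})$ is to guarantee that this hop count does not exceed the distance to the helper: the helper is $\Theta(s(n)^{-1})$ hops away, and $h = O(M\,s(n)^{-1})$ is exactly the statement $\tfrac{h}{M} = O(s(n)^{-1})$ that the content is, on average, decodable from relay caches before the routing path reaches the helper — so the D2D multihop mechanism, rather than the base station, delivers the content.

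The step I expect to require the most care is the reduction in the first paragraph: justifying that the average hop count is governed by the full-space spanning time of Lemma \ref{leme1} rather than by the strictly easier event that the single target $e_k$ happens to fall in the span. For the upper bound this is immediate, since spanning the space is sufficient for decoding any $x_k$. For the matching lower bound one must observe that placing a fixed nonzero target $e_k$ in the span of $t$ uniform vectors also demands $t = \Theta(h)$ vectors — the span of $t<h$ vectors is a subspace of dimension $d$ containing a fixed vector with probability only $\approx 2^{d-h}$, so a non-vanishing success probability forces $d$, and hence $t$, to be within a constant of $h$. This is what pins the estimate at $\Theta(h/M)$ rather than merely $O(h/M)$.
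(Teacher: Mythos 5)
Your proposal takes essentially the same route as the paper: the paper's proof likewise invokes Lemma \ref{leme1} to conclude that on average $\Theta(h)$ coded files, and hence $\Theta(h/M)$ relay UTs, are needed to decode a request, with the hypothesis $h = O(M s(n)^{-1})$ playing exactly the role you assign it. The one difference is that your last paragraph explicitly closes the lower-bound gap — noting that even the easier event of a fixed $e_k$ entering the span of $t$ uniform vectors has probability at most roughly $2^{t-h}$, so $t = h - O(1)$ vectors are still required — a point the paper's proof passes over silently by treating full-space spanning as the decoding criterion; this is a refinement of, not a departure from, the paper's argument.
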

\begin{proof}
Lemma \ref{leme1} shows that in order to decode a requested content, on average $\Theta( h)$ coded contents are required. Since each UT has a cache of size $M$, Lemma \ref{leme1} shows that on average we need $\Theta( \frac{h}{M})$  UTs (hops) to be able to decode the desired content. Notice that  each individual UT does not need to separately send their coded content to the requesting UT. Each UT  combines its  encoded files with a file that it receives from previous hop and forwards it to the next hop. 
\end{proof}
Using the results of theorem \ref{thm_coded} and equations \eqref{capa} and \eqref{ex113}, the throughput capacity of coded caching can be upper bounded as follows.
\begin{corol}{\em
 If $ h = O(M {s(n)^{-1}})$, the throughput capacity of a decentralized coded content caching network is upper bounded by
  \begin{equation}
  \lambda_{\textrm{coded}}(n) = O \left(\frac{M}{h F(h) \log n} \right).
  \label{capa_coded56}
 \end{equation}
The throughput in the right hand side of equation \eqref{capa_coded56} may not be achievable due to network congestion. In the following we will find achievable network throughputs. 
 \label{corol_coded93}
}\end{corol}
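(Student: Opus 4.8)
The plan is to obtain \eqref{capa_coded56} by directly chaining the three ingredients already established in the paper, since once the average hop count under coded caching is known, the corollary reduces to a substitution. First I would recall the general relationship \eqref{capa}, namely $\lambda(n) = \Theta\!\left(\frac{1}{\mathbb{E}[Y]\log n}\right)$, which collapses the entire capacity computation into a computation of the average number of D2D hops $\mathbb{E}[Y]$. This is precisely the point of writing the capacity in terms of $\mathbb{E}[Y]$ earlier: the bandwidth, guard-zone, and square-let geometry have all been absorbed into the $\log n$ factor, so nothing new is needed from the network model.

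Next I would apply the decomposition \eqref{ex113}, $\mathbb{E}[Y] = \mathbb{E}[Y \mid r \in \xi_h]\, F(h)$, which reflects that only the popular requests (occurring with probability $F(h)$) traverse the D2D network while the remaining requests are served by the base station and contribute no D2D hops. I would then substitute the conclusion of Theorem \ref{thm_coded}, which under the stated hypothesis $h = O(M s(n)^{-1})$ gives $\mathbb{E}[Y \mid r \in \xi_h] = \Theta(h/M)$. Combining these yields $\mathbb{E}[Y] = \Theta\!\left(\frac{h F(h)}{M}\right)$, and inserting this into \eqref{capa} produces $\lambda_{\textrm{coded}}(n) = \Theta\!\left(\frac{M}{h F(h)\log n}\right)$, hence in particular the claimed upper bound $O\!\left(\frac{M}{h F(h)\log n}\right)$.

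I do not anticipate a genuine obstacle, since the argument is a short substitution whose entire content is already carried by Lemma \ref{leme1} and Theorem \ref{thm_coded}. The only subtlety worth flagging is why the statement is phrased as an upper bound rather than a $\Theta$ estimate: the derivation through \eqref{capa} treats $\mathbb{E}[Y]$ as the sole bottleneck and implicitly assumes the deterministic routing of \cite{kulkarni2004deterministic} distributes load uniformly over the available square-lets. In practice, traffic on the paths toward the helper concentrates near the center of the cell, so congestion can force the genuinely achievable rate below this hop-count-based value. The computed expression is therefore best read as a ceiling dictated by the hop count, and establishing the actually achievable throughput (accounting for congestion) is deferred to the analysis in the following section.
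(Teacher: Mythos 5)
Your proposal is correct and matches the paper's own (implicit) argument exactly: the paper introduces the corollary with the single line ``Using the results of Theorem \ref{thm_coded} and equations \eqref{capa} and \eqref{ex113}, the throughput capacity of coded caching can be upper bounded as follows,'' which is precisely your chain $\mathbb{E}[Y \mid r \in \xi_h] = \Theta(h/M)$, then $\mathbb{E}[Y] = \mathbb{E}[Y \mid r \in \xi_h]\,F(h)$, then substitution into \eqref{capa}. Your closing remark on why the result is stated as $O(\cdot)$ rather than $\Theta(\cdot)$ --- congestion near the helper invalidating the uniform-load assumption behind \eqref{capa}, with the achievable rate deferred to Theorem \ref{thm_bottleneck1} --- is also exactly the paper's stated reason.
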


\begin{thm}{\em 
The decentralized uncoded and coded content caching strategies have the order throughput capacity of  
\begin{eqnarray}
  \lambda_{\textrm{uncoded}}(n) &=& \Theta \left(\frac{1}{ F(h) \log n} \left( \frac{M}{h \log(h)   }
  \right)^2 \right), 
  \label{capa_uncoded67}\\
  \lambda_{\textrm{coded}}(n) &=& \Theta \left(\frac{1}{ F(h) \log n} \left( \frac{M}{h  } \right)^2 \right), 
  \label{capa_coded67}
 \end{eqnarray}
respectively when $h \log(h) = O (M s(n)^{-1})$.
 \label{thm_bottleneck1}
}\end{thm}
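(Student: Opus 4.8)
The plan is to upgrade the load-balanced estimates of Corollaries \ref{corol_uncoded56} and \ref{corol_coded93} into the true achievable order by locating the congestion bottleneck that those corollaries warn about. Equation \eqref{capa} implicitly assumes that the $n\lambda(n)\mathbb{E}[Y]$ hop-transmissions per second are spread evenly over the $\Theta(n/\log n)$ square-lets. They are not: because every routing path heads toward the central helper along the Manhattan (horizontal-then-vertical) routes of \cite{kulkarni2004deterministic}, the square-lets close to the helper carry a disproportionate share of the relayed traffic. I would therefore replace the averaging step by a worst-square-let (cut) argument: the achievable $\lambda(n)$ is dictated by the single most heavily loaded square-let, which can transmit at most $\Theta(W/(c_2^2 Q))$ contents per second under the TDMA/Protocol-Model constraints.

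The heart of the argument is counting the deliveries that must cross the bottleneck square-let. First I would fix a square-let $S$ on the central (vertical) trunk near the helper and observe that a popular request by UT$_0$ loads $S$ precisely when $S$ lies on the segment between UT$_0$ and its serving relay, i.e. when $S$ is at most $\mathbb{E}[Y\mid r\in\xi_h]$ hops upstream of UT$_0$ along a path through $S$. Under the deterministic routing this is equivalent to UT$_0$ lying within Manhattan hop-distance $\mathbb{E}[Y\mid r\in\xi_h]$ of $S$ in the downstream region that funnels through $S$. That catchment region is a diamond of $\Theta\!\big(\mathbb{E}[Y\mid r\in\xi_h]^2\big)$ square-lets, and since each square-let of side $c_1 s(n)$ contains $\Theta(\log n)$ UTs with high probability, roughly $\Theta\!\big(\mathbb{E}[Y\mid r\in\xi_h]^2\log n\big)$ UTs route their content through $S$. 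Each injects popular traffic at rate $\lambda(n)F(h)$, so the load on $S$ is $\Theta\!\big(\mathbb{E}[Y\mid r\in\xi_h]^2\,F(h)\,\lambda(n)\log n\big)$.

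Equating this bottleneck load with the per-square-let transmission budget gives $\lambda(n)=\Theta\!\big(1/(\mathbb{E}[Y\mid r\in\xi_h]^2 F(h)\log n)\big)$, which is exactly one factor of $\mathbb{E}[Y\mid r\in\xi_h]$ below the optimistic bounds of the two corollaries. Substituting $\mathbb{E}[Y\mid r\in\xi_h]=\Theta(h\log h/M)$ from Theorem \ref{thm_uncoded} and $\mathbb{E}[Y\mid r\in\xi_h]=\Theta(h/M)$ from Theorem \ref{thm_coded} (both applicable since $h\log h=O(Ms(n)^{-1})$ implies $h=O(Ms(n)^{-1})$) yields \eqref{capa_uncoded67} and \eqref{capa_coded67}. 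This establishes the $O(\cdot)$ direction; for the matching $\Omega(\cdot)$ direction I would exhibit the TDMA schedule induced by the routing of \cite{kulkarni2004deterministic}, in which non-interfering square-lets fire in a fixed $c_2^2$-periodic pattern, and verify that once the bottleneck square-let is served at the above rate every other square-let can meet its strictly lighter demand, so the rate is genuinely achievable.

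The main obstacle is the geometric counting step: I must show both that the central-trunk square-lets are genuinely the worst case (off-trunk square-lets see only $O(\mathbb{E}[Y\mid r\in\xi_h]^2\log n)$ crossings, with smaller constants) and that replacing the random hop-count $Y$ by its mean in the catchment radius is legitimate. The latter requires a concentration/occupancy estimate so that the number of UTs per square-let is $\Theta(\log n)$ and the diamond contains $\Theta\!\big(\mathbb{E}[Y\mid r\in\xi_h]^2\big)$ cells with high probability rather than merely in expectation. The remaining algebra, together with the check that the hypothesis $h\log h=O(Ms(n)^{-1})$ keeps retrieval local (so deliveries are confined to the last $\mathbb{E}[Y\mid r\in\xi_h]$ hops instead of reaching the helper), is routine.
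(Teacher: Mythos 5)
Your proposal is correct and follows essentially the same route as the paper's own proof: identify the square-let nearest the helper as the congestion bottleneck, count the $\Theta\bigl(\mathbb{E}[Y\mid r\in\xi_h]^2\log n\bigr)$ UTs in its catchment area whose popular requests (rate $\lambda(n)F(h)$ each) it must serve, and equate that load with the $\Theta(1)$ contents-per-second TDMA budget of a single square-let, then substitute the hop counts from Theorems \ref{thm_uncoded} and \ref{thm_coded}. The extra care you flag (trunk square-lets being worst case, concentration of the per-square-let UT count, and the explicit achievability schedule) is left implicit in the paper's shorter argument but does not change the approach.
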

\begin{proof}{
 Clearly, the square-lets that are closer to the helper are more prone to traffic congestion. Hence, if  the achievable throughput capacity of the square-let that contains the helper is computed, this capacity is also achievable in  other square-lets. This square-let must respond to all requests which are located on average within a radius of $\Theta(h/M)$ hops away from it (or $\Theta({h \log(h)}/{M})$ hops in  uncoded caching case). Therefore, it should be able to serve on average  $ \Theta ( \log n ( {h}/{M})^2 )$ 
 (or $ \Theta ( \log n ( {h \log(h)}/{M})^2 ) $ ) requests. The probability that the popular contents are requested by UTs is $F(h)$. Thus, the average number of popular content requests from this square-let is 
 $\Theta ( \log n (  {h}/{M})^2 F(h) )$ (or $ \Theta ( \log n ( {h \log(h)}/{M})^2 F(h)) $). 

 Therefore, for the case of coded caching, this square-let should be able to deliver $ \Theta (\lambda_{\textrm{coded}}(n)$  $({h}/{M})^2 F(h) \log n  )$ contents per second and $ \Theta ( \lambda_{\textrm{uncoded}}(n) (  {h \log(h)}/{M})^2 F(h) \log n  )$ contents per second for uncoded caching. Since the network has a bandwidth of $W$ and we use TDMA scheme, each square-let can only deliver $\Theta(1)$ contents per second. Therefore, both $\lambda_{\textrm{coded}}(n) ({h}/{M})^2 F(h) \log n $
 and $\lambda_{\textrm{uncoded}}(n) ~  (  {h \log(h)}/{M})^2 F(h) \log n $ should scale as $\Theta(1)$. This proves the theorem.
}\end{proof}
Since the throughput capacity in \eqref{capa_uncoded67} (or \eqref{capa_coded67}) is achievable throughput and less than equation \eqref{capa_uncoded93} (or \eqref{capa_coded56}), then equation \eqref{capa_uncoded93} (or \eqref{capa_coded56}) cannot be achieved and is only an upper bound.
\begin{rem}
 {\em
Theorem \ref{thm_bottleneck1} shows that coded caching increases the throughput capacity of the multihop femtocaching D2D network by a factor of $\Theta((\log(h))^2)$. Since $h$ can be a function of $n$ (as will be shown in the next section), it can be concluded that coded caching can increase the throughput capacity of the network up to a factor of $\Theta((\log n)^2)$.
 }
 \label{coded_gain}
\end{rem}

\begin{rem}
 {\em 
In general, majority of the contents can be retrieved from other UTs. However, if a content cannot be retrieved from UTs on the path between the helper and the requesting UT, then the helper directly sends this content through multihop communication to the requesting UT. This can happen both in coded and uncoded caching schemes and it happens when the requesting UT is very close to the helper node.
}
\end{rem}

\section{Capacity of  networks with Zipfian content request distribution}
\label{zipfian_networks}
In this section, we compute the throughput capacity of Zipfian distribution by utilizing the results in sections \ref{uncoded_sec} and \ref{coded_sec}. To proceed, we first prove the following lemma. 
\begin{lem}{\em
 Let $\mu$ and $\eta$ be constants such that  $\mu > \eta >0$ and let $a(n)$ and $b(n)$ be two functions that scale as $\Theta(n^{\eta})$ and $\Theta(n^{\mu})$ respectively. Then, for $s>1$ we have
 \begin{eqnarray}
  \sum_{i=a(n)+1}^{b(n)} i^{-s} = \Theta (n^{-\eta (s-1)}).
  \label{mfbh}
 \end{eqnarray}
 \label{lem_zipf}
}\end{lem}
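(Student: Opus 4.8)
The plan is to sandwich the finite sum between integrals of the monotone function $x^{-s}$ and then isolate the leading asymptotic term. Since $x \mapsto x^{-s}$ is strictly decreasing on $(0,\infty)$, the integral test yields
\[
\int_{a(n)+1}^{b(n)+1} x^{-s}\, dx \;\le\; \sum_{i=a(n)+1}^{b(n)} i^{-s} \;\le\; \int_{a(n)}^{b(n)} x^{-s}\, dx .
\]
First I would evaluate both bounds in closed form: because $s>1$, an antiderivative is $\frac{x^{1-s}}{1-s}$, so each bound equals $\frac{1}{s-1}\big(L^{1-s}-U^{1-s}\big)$ for the appropriate lower and upper limits $L,U$. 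Up to the harmless $+1$ shifts in the limits (which only affect multiplicative constants, as $a(n),b(n)\to\infty$), the sum is therefore $\Theta\big(a(n)^{1-s}-b(n)^{1-s}\big)$.

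Next I would substitute the prescribed growth rates. Since the exponent $1-s$ is a fixed negative constant and the map $t\mapsto t^{1-s}$ is monotone, raising to that power preserves the $\Theta$ relation, giving $a(n)^{1-s}=\Theta\big(n^{\eta(1-s)}\big)=\Theta\big(n^{-\eta(s-1)}\big)$ and likewise $b(n)^{1-s}=\Theta\big(n^{-\mu(s-1)}\big)$. The decisive step — and the only place the hypothesis $\mu>\eta$ enters — is that the lower-endpoint term dominates: because $s-1>0$ and $\mu>\eta$ we have $\mu(s-1)>\eta(s-1)$, so $b(n)^{1-s}=o\big(a(n)^{1-s}\big)$, and hence
\[
a(n)^{1-s}-b(n)^{1-s}=\Theta\big(a(n)^{1-s}\big)=\Theta\big(n^{-\eta(s-1)}\big),
\]
which is the claimed estimate.

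The step I expect to require the most care is precisely this last one: arguing that the difference of two $\Theta$-terms is itself $\Theta$ of the larger one. This is not automatic for subtraction, and it relies on the strict inequality $\mu>\eta$ forcing the subtracted term to be a vanishing fraction of the leading term, so that the lower bound does not collapse. A secondary bookkeeping subtlety is that $a(n)$ and $b(n)$ are only specified up to $\Theta$, so I must confirm that the constants from the integral bounds and from the exponentiation $(\cdot)^{1-s}$ combine consistently; this is routine once one notes that $s$ is fixed, so all implied constants depend only on $s,\eta,\mu$ and the constants hidden in $a(n),b(n)$.
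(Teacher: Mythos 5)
Your proof is correct, and it takes a genuinely different route on one of the two sides. For the lower bound you and the paper do essentially the same thing: compare the sum to $\int_{a(n)+1}^{b(n)}x^{-s}\,dx$ (the paper) or $\int_{a(n)+1}^{b(n)+1}x^{-s}\,dx$ (you), evaluate in closed form, and use $\mu>\eta$ with $s>1$ to argue that the $b(n)^{1-s}$-type term is $o\bigl(a(n)^{1-s}\bigr)$, so the difference is $\Theta\bigl(n^{-\eta(s-1)}\bigr)$; you are right that this dominance step is the only place $\mu>\eta$ is used, and you justify it properly. For the upper bound, however, the paper does not use the integral test: it partitions the index range into blocks of length $a(n)$, bounds each block by $a(n)\,(ja(n))^{-s}=a(n)^{1-s}j^{-s}$, and sums over $j$ to get the explicit constant $\zeta(s)$, i.e.\ the bound $a(n)^{1-s}\zeta(s)$. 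Your two-sided integral-test sandwich is more uniform and more economical, since one antiderivative computation delivers both bounds with constants depending only on $s$ (plus the implied constants in $a(n),b(n)$, which you correctly note must be tracked through the exponentiation $t\mapsto t^{1-s}$). Incidentally, your route also sidesteps a small bookkeeping slip in the paper's blocking step: as written, the paper drops the nonnegative tail $\sum_{i=d(n)a(n)+1}^{b(n)}i^{-s}$ while claiming an upper bound, which goes the wrong direction; the fix is trivial (extend the blocks to $j=d(n)$, which still yields $a(n)^{1-s}\zeta(s)$), but the integral comparison never encounters the issue. Both arguments are equally rigorous once patched; the paper's buys an explicit constant via the Riemann zeta function, yours buys symmetry and brevity.
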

\begin{proof}
 Let $d(n) \triangleq \lfloor \frac{b(n)}{a(n)} \rfloor$. We have,
  \begin{align}
  \sum_{i=a(n)+1}^{b(n)} i^{-s} &= \sum_{j=1}^{d(n)-1} 
  \sum_{i=ja(n)+1}^{(j+1)a(n)} i^{-s} + \sum_{i=d(n) a(n)+1}^{b(n)} i^{-s}
  \nonumber \\
  &\le \sum_{j=1}^{d(n)-1} \sum_{i=ja(n)+1}^{(j+1)a(n)} i^{-s}. 
  \label{msdf_temyufbh11}
 \end{align}
 Since for $ja(n)+1 \le i \le (j+1)a(n)$ we have $ja(n) < i$, then
 \begin{align}
  \sum_{i=ja(n)+1}^{(j+1)a(n)} i^{-s} < \sum_{i=ja(n)+1}^{(j+1)a(n)} (j a(n))^{-s} 
  = a(n) (j a(n))^{-s}.
  \label{mdby_polghr11}
 \end{align}
Combining \eqref{msdf_temyufbh11} and \eqref{mdby_polghr11}, we arrive at 
  \begin{eqnarray}
  \sum_{i=a(n)+1}^{b(n)} i^{-s} < {a(n)}^{-s+1} \sum_{j=1}^{d(n)-1} j^{-s} 
< {a(n)}^{-s+1} \zeta(s),
  \label{mimulo}
  \end{eqnarray}
  where $\zeta(s)=\sum_{i=1}^{\infty} i^{-s}$ denotes the Reimann zeta function and it is a constant value 
  for $s>1$. Therefore, the upper bound is given by  
    \begin{eqnarray}
  \sum_{i=a(n)+1}^{b(n)} i^{-s} = O ({a(n)}^{-s+1}) = O (n^{-\eta(s-1)}). 
  \label{mimnuilulo}
  \end{eqnarray}
 For the lower bound of this summation, we will use an integral approximation to derive the  results. 
  \begin{align}
   \sum_{i=a(n)+1}^{b(n)} i^{-s} &\ge \int_{a(n)+1}^{b(n)} x^{-s} 
   \mathrm{d}x \nonumber \\ 
   &= \frac{(a(n)+1)^{-s+1}-{b(n)}^{-s+1}}{s-1}
   \label{mmmm_lem0}
  \end{align}
Since, $a(n)$ scales as $\Theta(n^{\eta})$ and $b(n)$ scales as $\Theta(n^{\mu})$ and $\mu > \eta$, then the first term in the right hand side of equation \eqref{mmmm_lem0} is dominant and we have 
    \begin{eqnarray}
  \sum_{i=a(n)+1}^{b(n)} i^{-s} = \Omega (n^{-\eta(s-1)}). 
  \label{mimnuilulo_lb}
  \end{eqnarray}
 This proves the lemma.
\end{proof}
As mentioned in section \ref{netmod}, we assume that $m$ is growing polynomially with $n$. Lets assume 
that $h$ which is a tiny fraction of $m$ also grows polynomially with $n$. We will now find the necessary growth rate of $h$ with $n$ to guarantee that the request probability for non-popular contents decays polynomially with $n$ with a decay rate of $\rho > 0$. In other words, we want to find the necessary growth 
rate for $h_{\rho}$ such that for constants $c_6$ and $n_1$ and for any $n > n_1$ we have
 \begin{align}
  \mathbb{P}[r > h_{\rho}] \le c_6 n^{-\rho}.
  \label{eqs_criteria}
 \end{align}
Assuming that the necessary growth rate for $h_{\rho}$ is $\tau$, i.e.$h_{\rho}$= $\Theta(n^{\tau})$, using Lemma \ref{lem_zipf}, we have 
  \begin{align}
   \sum_{i=h_{\rho}+1}^m i^{-s}= \Theta(n^{-\tau (s-1)}).
   \label{eqs_h_finds1}
  \end{align}
 This means that there exist constant $c_7$ and $n_2$ such that for any $n > n_2$ we have   
    \begin{align}
   \sum_{i=h_{\rho}+1}^m i^{-s} \le c_7 n^{-\tau (s-1)}.
   \label{eqs_h_finds2}
  \end{align}
 Since $H_{m,s} \ge 1$, we arrive at
 \begin{align}
  \mathbb{P}[r > h_{\rho}] =1-F(h_{\rho}) = \sum_{i=h_{\rho} +1}^{m} 
  \frac{i^{-s}}{H_{m,s}}  \le c_7 n^{-\tau (s-1)}.
  \label{h_ngtru}
 \end{align}
 Therefore, to ensure that equation \eqref{eqs_criteria} remains valid, it is enough to choose $c_7$ equal to $c_6$ and $\tau=\frac{\rho}{s-1}$. Hence, to guarantee that \eqref{eqs_criteria} holds, $h_{\rho}$ should scale as 
 \begin{equation}
  h_{{\rho}} = \Theta(n^{\frac{\rho}{s-1}}).
  \label{h_tau}
 \end{equation}
 Therefore, in a network where $m$ grows polynomially with $n$ with a rate of $\alpha > \frac{\rho}{s-1}$,
 if $h_{\rho}$ grows as \eqref{h_tau}, the request probability for non-popular contents decays faster than  $\Theta(n^{-\rho})$. For the rest of this section, we assume that $\rho$ is a design parameter and the helper caches contents from among the $h_{{\rho}} = \Theta(n^{\frac{\rho}{s-1}})$ most popular contents.

\begin{rem}{\em If we choose $h_{\rho}$ based on equation \eqref{h_tau} such that it satisfies $h_{\rho}  \log(h_{\rho} )=O(M s(n)^{-1})$,  equations \eqref{capa_uncoded67} and \eqref{capa_coded67} can be rewritten as 
 \begin{eqnarray}
  \lambda_{\textrm{uncoded}}^{{\rho}}(n) &=& \Theta \left( \frac{1}{\log n}    \left(
  \frac{M}{h_{\rho} \log(h_{\rho} )} \right)^2 \right), 
  \label{mmmm_uncoded} \\
  \lambda_{\textrm{coded}}^{{\rho}}(n) &=& \Theta \left( \frac{1}{\log n}    \left( \frac{M}{h_{\rho} } \right)^2 \right). 
  \label{mmmm_coded}
 \end{eqnarray}
 Equations \eqref{mmmm_uncoded} and \eqref{mmmm_coded} show that coded caching can increase the throughput capacity of Zipfian networks by a factor of $(\log(h_{\rho}))^2$ which from equation \eqref{h_tau}, it implies a factor of $(\log n)^2$ increase in throughput capacity.
 \label{mmmd_rufgsdf}
 }\end{rem}
\begin{thm}{\em
For a network with non-heavy tailed Zipfian content request distribution ($s>1$) such that the probability of content request from the base station decays polynomially with $n$ with a rate of $\rho$ and $h_{\rho}  \log(h_{\rho} )=O(M s(n)^{-1})$, then the following throughputs are achievable for the D2D network.
 \begin{eqnarray}
   \lambda_{\textrm{uncoded}}^{\textrm{Zipf},\rho}(n)  &=& \Theta \left(\frac{n^{- \frac{2\rho}{s-1}}}{(\log n)^3 } M^2 \right)
  \label{zipf_regimeI2222} \\
   \lambda_{\textrm{coded}}^{\textrm{Zipf}, \rho}(n)  &=& \Theta \left( \frac{n^{- \frac{2\rho}{s-1}}}{\log n } M^2 \right)
   \label{zipf_regimeI22222I} 
 \end{eqnarray}
 }\label{thm_maansdool}
\end{thm}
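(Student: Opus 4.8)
The plan is to obtain both throughput expressions by specializing the general achievable-capacity formulas of Theorem \ref{thm_bottleneck1} (equations \eqref{capa_uncoded67} and \eqref{capa_coded67}) to the Zipfian regime, in which the growth rate of $h_{\rho}$ has already been pinned down in equation \eqref{h_tau}. Since the standing hypothesis $h_{\rho}\log(h_{\rho}) = O(M s(n)^{-1})$ is precisely the condition under which Theorem \ref{thm_bottleneck1} applies, both formulas are available, and the entire argument reduces to substituting $h_{\rho} = \Theta(n^{\frac{\rho}{s-1}})$ and simplifying, while carefully tracking the $F(h_{\rho})$ factor and the logarithmic terms.

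First I would argue that the cumulative mass $F(h_{\rho})$ concentrates. By the defining property of $h_{\rho}$ in equation \eqref{eqs_criteria}, the tail satisfies $\mathbb{P}[r > h_{\rho}] = 1 - F(h_{\rho}) \le c_6 n^{-\rho} \to 0$, so $F(h_{\rho}) \to 1$ and hence $F(h_{\rho}) = \Theta(1)$. This justifies dropping the $1/F(h)$ factor from \eqref{capa_uncoded67} and \eqref{capa_coded67} without changing the order, which is exactly the reduction already recorded in Remark \ref{mmmd_rufgsdf}, yielding equations \eqref{mmmm_uncoded} and \eqref{mmmm_coded} as the starting point.

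Next I would substitute the growth rate \eqref{h_tau}, namely $h_{\rho} = \Theta(n^{\frac{\rho}{s-1}})$, into \eqref{mmmm_uncoded} and \eqref{mmmm_coded}. For the coded case this gives $\left(M/h_{\rho}\right)^2 = \Theta\!\left(M^2 n^{-\frac{2\rho}{s-1}}\right)$, and dividing by $\log n$ produces \eqref{zipf_regimeI22222I} directly. For the uncoded case I would additionally use that $\log(h_{\rho}) = \Theta(\log n)$, since $\log(n^{\frac{\rho}{s-1}}) = \frac{\rho}{s-1}\log n$ with $\frac{\rho}{s-1}$ a constant; then $\left(M/(h_{\rho}\log h_{\rho})\right)^2 = \Theta\!\left(M^2 n^{-\frac{2\rho}{s-1}}/(\log n)^2\right)$, and dividing by $\log n$ yields the extra $(\log n)^3$ in the denominator of \eqref{zipf_regimeI2222}.

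There is no serious analytic obstacle here; the single point requiring care is the justification that $F(h_{\rho}) = \Theta(1)$, which is what makes the popular-content requests dominate and allows the $F(h)$ factor to be absorbed into the implicit constants. The rest is bookkeeping of the polynomial and logarithmic factors, where the only thing to watch is that the single logarithm of the coded formula becomes a cube in the uncoded formula once the $\log^2(h_{\rho}) = \Theta\!\left((\log n)^2\right)$ over-caching penalty of uncoded placement is folded in.
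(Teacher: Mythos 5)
Your proposal is correct and follows essentially the same route as the paper's proof: both specialize the achievable throughputs of Theorem \ref{thm_bottleneck1} (via equations \eqref{mmmm_uncoded} and \eqref{mmmm_coded} of Remark \ref{mmmd_rufgsdf}) by substituting $h_{\rho} = \Theta(n^{\frac{\rho}{s-1}})$ from \eqref{h_tau} and using $\log(h_{\rho}) = \Theta(\log n)$. If anything, you are more careful than the paper, which drops the $F(h_{\rho})$ factor without comment, whereas you explicitly justify $F(h_{\rho}) = \Theta(1)$ from the tail bound \eqref{eqs_criteria}.
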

\begin{proof}{
 As mentioned earlier, in a Zipfian content request distribution with $s>1$
 to ensure that the probability of requesting non-popular contents decays polynomially
 with $n$ with a rate of $\rho$, it is enough to choose $h_{\rho}$ as in equation \eqref{h_tau}. If we use 
 this value for $h_{\rho}$ and plug it in equations \eqref{mmmm_uncoded} and \eqref{mmmm_coded}, we will 
 arrive at equations \eqref{zipf_regimeI2222} and \eqref{zipf_regimeI22222I}.
}\end{proof}

\section{Simulations}
\label{sim_sec}
The simulation results  compare the performance of  proposed decentralized coded content caching with decentralized uncoded content caching. The helper serves $n=1000$ UTs in the network with Zipfian content request probability with parameter $s=2$. In this network,  $h=100$ highly popular contents  account for more than 99\%  of the total content requests. Our simulations are carried over a cell with radius 2000 meters and for a D2D transmission range of 10 meters. Figure \ref{fig_sim_actual} shows the simulation results comparing the average number of hops  required to decode the contents in both decentralized coded and uncoded content caching schemes. The simulation results clearly demonstrate that decentralized  coded cache placement outperforms uncoded case particularly when the cache size is small which is the usual 
operating regime. For instance, with decentralized coded content caching, a cache of size 20 only requires less than 5 hops while decentralized uncoded content caching needs around 22 hops for successful content retreival. This makes  coded content caching much more practical compared to  uncoded content caching. Note that capacity is inversely proportional to the average hop counts. 
\begin{figure}[http]
    \center
      \includegraphics[scale=0.3,angle=0]{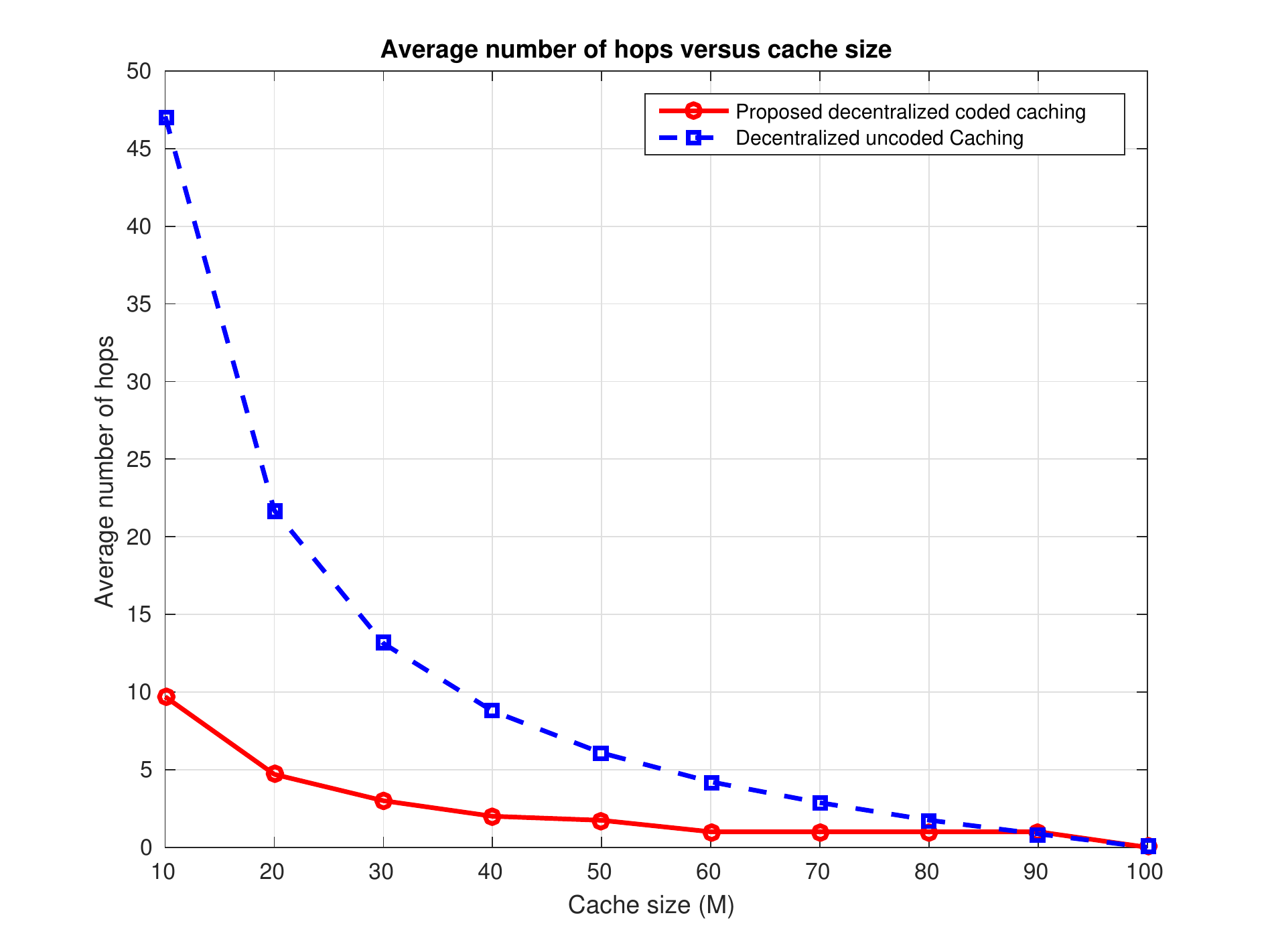}
\caption{\color{black} Simulation results for a helper serving 1000 UTs in a cell of radius 2000 meters 
with a D2D transmission range of 10 meters and a total of 100 popular contents. }
\label{fig_sim_actual}
\end{figure}

As can be seen from Figure \ref{fig_sim_actual}, for small cache sizes, coded cache placement significantly reduces the number of  hops required to decode the contents. This property is important for systems with small storage capability for UTs since large number of  hops can impose excessive delay and low  quality of service.

Figure \ref{fig_theory} compares throughput capacity of  coded content caching  with  uncoded content caching. The content popularity request distribution is Zipfian with parameter $s=2.5$. The results demonstrate significant capacity gain for decentralized coded content cache placement. The parameter $\rho = 0.75$ suggests that $h$ scales as $\Theta(\sqrt{n})$ in this plot (equation \eqref{h_tau}). Notice that in this plot, $m$ can scale as $\Theta(n^{\alpha})$ where $\alpha > 0.5$ can potentially be a large number.
As can be seen from Figure \ref{fig_theory}, for only 100 UTs and a constant cache size, throughput capacity of coded caching is 20 times higher than the throughput capacity of uncoded caching.

\begin{figure}[http]
    \center
      \includegraphics[scale=0.35,angle=0]{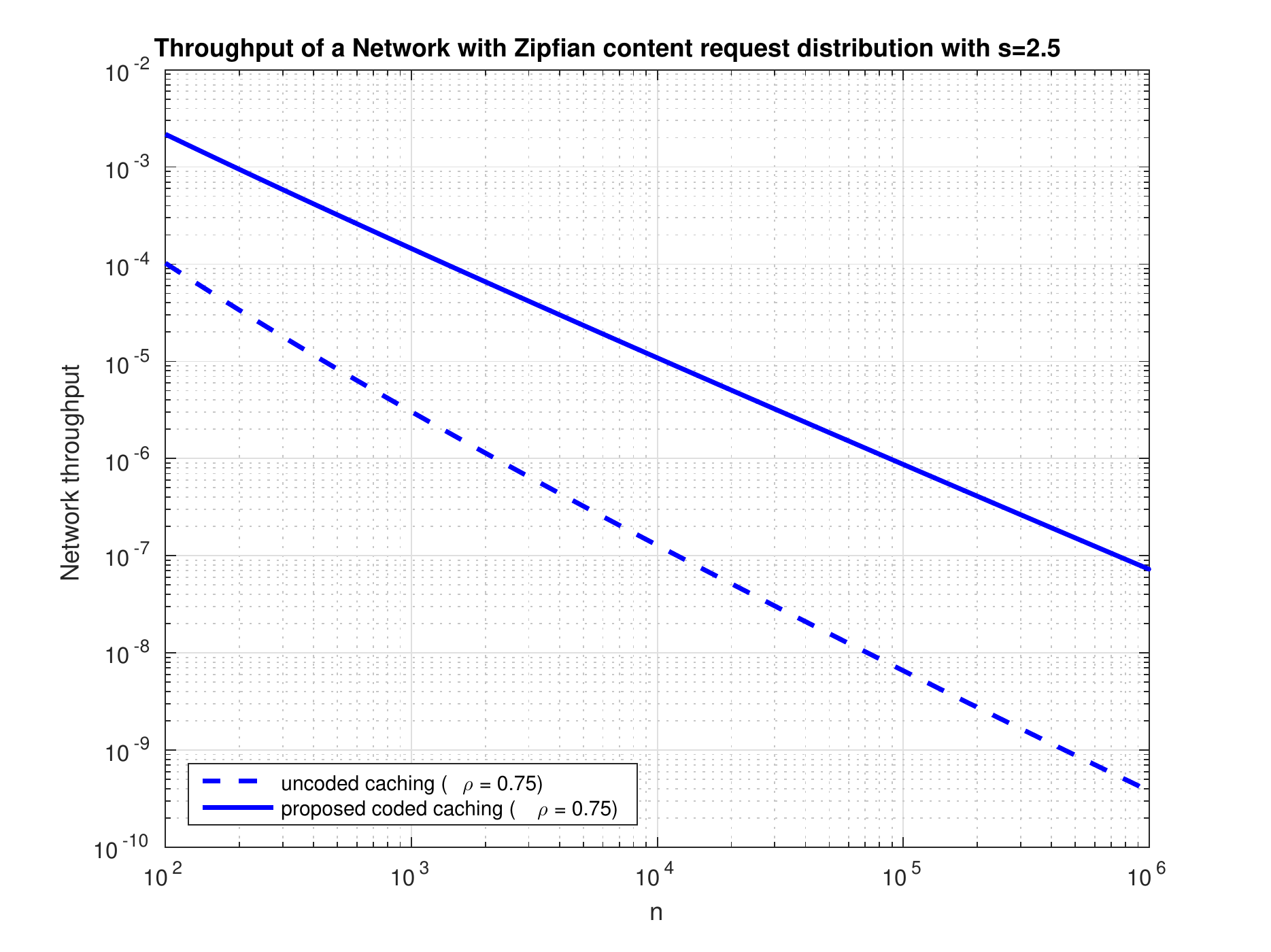}
\caption{Network throughput capacity comparison of the decentralized coded content caching and decentralized uncoded content caching schemes.}
\label{fig_theory}
\end{figure}

\section{Discussion}
\label{disc_sec}
Decentralized coded caching uses uniform random vectors in $\mathbb{F}_2^h$. This approach can be considered as special case of LT-codes \cite{DBLP:journals/tit/Shokrollahi06}. In fact, coded caching is a  random LT-code with parameters $(h, \Omega(x))$ where $\Omega(x) = \frac{1}{2^h}(1+x)^h$ is the generator polynomial \cite{DBLP:conf/focs/Luby02}. LT-codes are a practical realization of fountain codes which are proven to be very useful in erasure channels and storage systems. The uniform random LT-code in our paper is using all the possible random vectors in $\mathbb{F}_2^h$ for encoding. This allows the receiver UT to decode any  content by accessing $h+1.6067$ cache locations (on average) which is very close to the optimal value of $h$. Using other types of LT-codes, the decoding cost can be reduced but they increase the number of required cache locations to decode the contents and therefore decrease the capacity compared to uniform random LT-codes.  On the other hand, Raptor codes \cite{DBLP:journals/tit/Shokrollahi06} which are another class of fountain codes can be used to perform the decoding in constant time with slightly fewer number of required cache locations (greater than or equal to $h$). However, if $h$ is large, then the achieved throughput capacity with raptor codes is very close to our proposed technique.

The proposed approach is similar to  Random Linear Network Coding (RLNC) \cite{ho2006random}. In RLNC, each content is divided into chunks and those chunks are randomly coded and distributed in the network. In order for the user to decode the content, it must receive enough innovative packets such that it can decode those chunks. In our approach, we linearly combine different contents and only are interested in one of the contents. Therefore, in our decoding approach, we don't transmit all encoded files to the receiver. Instead, the encoding information is sent hop by hop from the helper to the UTs. Each UT needs to use that encoding information to uniquely combine its cached files with the received file from previous hop and transmit it to the next hop for more processing. One can use coded content caching to store files and then use network coding to transfer these coded contents in the network instead of simply transmitting the entire encoded files. 

\section{Conclusions}
\label{conc_sec}
In this paper, we studied the throughput capacity of cellular networks with femtocaches using decentralized uncoded and coded content cache placement for UTs. We proposed multihop communications to take away some of the communication burden from the helpers and base station and transfer it to UTs with storage capability. We proved that multihop communication together with clever use of UT cache placement strategy can increase the throughput capacity of these networks. 

We studied the case of coded caching and  proposed a near optimal decentralized coded content cache placement scheme which can increase the throughput capacity by a factor of $(\log n)^2$ over decentralized uncoded content caching. Using our proposed decentralized coded content cache placement scheme, we computed
the throughput capacity of cellular networks operating under a Zipfian content request distribution. 

The results are achieved with minimal overhead in contrast to works like \cite{jeon2015caching,jeon2015wireless} where higher overhead is required to find the content and route. In our proposed solution,  all of the requests are sent toward the helper which significantly simplifies the routing problem. However, our solution requires more computational complexity for the helpers as they should compute the appropriate coding gains and send them to the UTs along the routing path. Therefore, in our proposed solution, helpers not only require to have abundant storage capacity but also  considerable
computational capability. In fact, we are trading bandwidth with added computational complexity and storage in our decentralized coded content caching scheme. In our future work, we intend to minimize this computational complexity for the helpers. Other important issues are related to delay and security that this paper did not address and will be part of future work. 

\bibliographystyle{plain}
\bibliography{All-Papers-J4}

\begin{thebibliography}{10}

\bibitem{ieee80211ad}
{Amendments in {IEEE} 802.11ad™ enable multi-gigabit data throughput and
  groundbreaking improvements in capacity}.
\newblock \url{https://standards.ieee.org/news/2013/802.11ad.html}, 2013.
\newblock [Online; accessed 9-October-2015].

\bibitem{boccardi2014five}
Federico Boccardi, Robert~W Heath, Aurelie Lozano, Thomas~L Marzetta, and Petar
  Popovski.
\newblock Five disruptive technology directions for 5{G}.
\newblock {\em Communications Magazine, IEEE}, 52(2):74--80, 2014.

\bibitem{breslau1998implications}
Lee Breslau, Pei Cao, Li~Fan, Graham Phillips, and Scott Shenker.
\newblock On the implications of {Z}ipf’s law for web caching.
\newblock Technical report, Citeseer, 1998.

\bibitem{breslau1999web}
Lee Breslau, Pei Cao, Li~Fan, Graham Phillips, and Scott Shenker.
\newblock Web caching and {Z}ipf-like distributions: {E}vidence and
  implications.
\newblock In {\em INFOCOM'99. Eighteenth Annual Joint Conference of the IEEE
  Computer and Communications Societies. Proceedings. IEEE}, volume~1, pages
  126--134. IEEE, 1999.

\bibitem{chen2014fundamental}
Zhi Chen.
\newblock Fundamental limits of caching: Improved bounds for small buffer
  users.
\newblock {\em arXiv preprint arXiv:1407.1935}, 2014.

\bibitem{erdHos1961classical}
Paul Erd{\H{o}}s.
\newblock On a classical problem of probability theory.
\newblock 1961.

\bibitem{golrezaei2012femtocaching}
Negin Golrezaei, Karthikeyan Shanmugam, Alexandros~G Dimakis, Andreas~F
  Molisch, and Giuseppe Caire.
\newblock Femtocaching: Wireless video content delivery through distributed
  caching helpers.
\newblock In {\em INFOCOM, 2012 Proceedings IEEE}, pages 1107--1115. IEEE,
  2012.

\bibitem{hachem2014multi}
Jad Hachem, Nikhil Karamchandani, and Suhas Diggavi.
\newblock Multi-level coded caching.
\newblock In {\em Information Theory (ISIT), 2014 IEEE International Symposium
  on}, pages 56--60. IEEE, 2014.

\bibitem{ho2006random}
Tracey Ho, Muriel M{\'e}dard, Ralf Koetter, David~R Karger, Michelle Effros,
  Jun Shi, and Ben Leong.
\newblock A random linear network coding approach to multicast.
\newblock {\em Information Theory, IEEE Transactions on}, 52(10):4413--4430,
  2006.

\bibitem{jeon2015caching}
Sang-Woon Jeon, Song-Nam Hong, Mingyue Ji, and Giuseppe Caire.
\newblock Caching in wireless multihop device-to-device networks.
\newblock In {\em Communications (ICC), 2015 IEEE International Conference on},
  pages 6732--6737. IEEE, 2015.

\bibitem{jeon2015wireless}
Sang-Woon Jeon, Song-Nam Hong, Mingyue Ji, Giuseppe Caire, and Andreas~F
  Molisch.
\newblock Wireless multihop device-to-device caching networks.
\newblock {\em arXiv preprint arXiv:1511.02574}, 2015.

\bibitem{ji2013wireless}
Mingyue Ji, Giuseppe Caire, and Andreas~F Molisch.
\newblock Wireless device-to-device caching networks: Basic principles and
  system performance.
\newblock {\em arXiv preprint arXiv:1305.5216}, 2013.

\bibitem{ji2014fundamental}
Mingyue Ji, Giuseppe Caire, and Andreas~F Molisch.
\newblock Fundamental limits of caching in wireless {D2D} networks.
\newblock {\em arXiv preprint arXiv:1405.5336}, 2014.

\bibitem{karamchandani2014hierarchical}
Nikhil Karamchandani, Urs Niesen, Mohammad~Ali Maddah-Ali, and Suhas Diggavi.
\newblock Hierarchical coded caching.
\newblock In {\em Information Theory (ISIT), 2014 IEEE International Symposium
  on}, pages 2142--2146. IEEE, 2014.

\bibitem{azimdoost2013}
Mohsen Karimzadeh~Kiskani, Bita Azimdoost, and Hamid Sadjadpour.
\newblock Effect of social groups on the capacity of wireless networks.
\newblock {\em Wireless Communications, IEEE Transactions on}, 15:3--13, 2016.

\bibitem{kiskani2016capacity}
Mohsen Karimzadeh~Kiskani and Hamid Sadjadpour.
\newblock Capacity of cellular networks with femtocache.
\newblock In {\em {IEEE} Conference on Computer Communications Workshops,
  {INFOCOM} Workshops, San Francisco, USA}, April 2016.

\bibitem{kiskani2015multihop}
Mohsen~Karimzadeh Kiskani and Hamid~R. Sadjadpour.
\newblock Multihop caching-aided coded multicasting for the next generation of
  cellular networks.
\newblock {\em Vehicular Technology, IEEE Transactions on}, June 2016.

\bibitem{knuth1976big}
Donald~E Knuth.
\newblock Big omicron and big omega and big theta.
\newblock {\em ACM Sigact News}, 8(2):18--24, 1976.

\bibitem{kulkarni2004deterministic}
Sanjeev~R Kulkarni and Pramod Viswanath.
\newblock A deterministic approach to throughput scaling in wireless networks.
\newblock {\em Information Theory, IEEE Transactions on}, 50(6):1041--1049,
  2004.

\bibitem{lee2015index}
Namyoon Lee, Alexandros~G Dimakis, and Robert~W Heath.
\newblock Index coding with coded side-information.
\newblock {\em Communications Letters, IEEE}, 19(3):319--322, 2015.

\bibitem{DBLP:conf/focs/Luby02}
Michael Luby.
\newblock {LT} codes.
\newblock In {\em 43rd Symposium on Foundations of Computer Science {(FOCS}
  2002), 16-19 November 2002, Vancouver, BC, Canada, Proceedings}, page 271,
  2002.

\bibitem{maddah2013decentralized}
Mohammad~Ali Maddah-Ali and Urs Niesen.
\newblock Decentralized coded caching attains order-optimal memory-rate
  tradeoff.
\newblock In {\em Communication, Control, and Computing (Allerton), 2013 51st
  Annual Allerton Conference on}, pages 421--427. IEEE, 2013.

\bibitem{maddah2014fundamental}
Mohammad~Ali Maddah-Ali and Urs Niesen.
\newblock Fundamental limits of caching.
\newblock {\em Information Theory, IEEE Transactions on}, 60(5):2856--2867,
  2014.

\bibitem{niesen2014coded}
Urs Niesen and Mohammad~Ali Maddah-Ali.
\newblock Coded caching with nonuniform demands.
\newblock In {\em Computer Communications Workshops (INFOCOM WKSHPS), 2014 IEEE
  Conference on}, pages 221--226. IEEE, 2014.

\bibitem{pedarsani2014online}
Ramtin Pedarsani, Mohammad~Ali Maddah-Ali, and Urs Niesen.
\newblock Online coded caching.
\newblock In {\em Communications (ICC), 2014 IEEE International Conference on},
  pages 1878--1883. IEEE, 2014.

\bibitem{penrose1997longest}
Mathew~D Penrose.
\newblock The longest edge of the random minimal spanning tree.
\newblock {\em The annals of applied probability}, pages 340--361, 1997.

\bibitem{rappaport2013millimeter}
Theodore~S Rappaport, Shu Sun, Rimma Mayzus, Hang Zhao, Yaniv Azar, Kangping
  Wang, George~N Wong, Jocelyn~K Schulz, Mathew Samimi, and Felix Gutierrez.
\newblock Millimeter wave mobile communications for 5{G} cellular: {I}t will
  work!
\newblock {\em Access, IEEE}, 1:335--349, 2013.

\bibitem{shanmugam2013femtocaching}
Karthikeyan Shanmugam, Negin Golrezaei, Alexandros~G Dimakis, Andreas~F
  Molisch, and Giuseppe Caire.
\newblock Femtocaching: Wireless content delivery through distributed caching
  helpers.
\newblock {\em Information Theory, IEEE Transactions on}, 59(12):8402--8413,
  2013.

\bibitem{DBLP:journals/tit/Shokrollahi06}
Amin Shokrollahi.
\newblock Raptor codes.
\newblock {\em {IEEE} Trans. Information Theory}, 52(6):2551--2567, 2006.

\bibitem{xue2006scaling}
Feng Xue and Panganamala~R Kumar.
\newblock {\em Scaling laws for ad hoc wireless networks: an information
  theoretic approach}.
\newblock Now Publishers Inc, 2006.

\end{thebibliography}

\begin{IEEEbiography}
[{\includegraphics[width=1in,height=1.25in,clip,keepaspectratio]{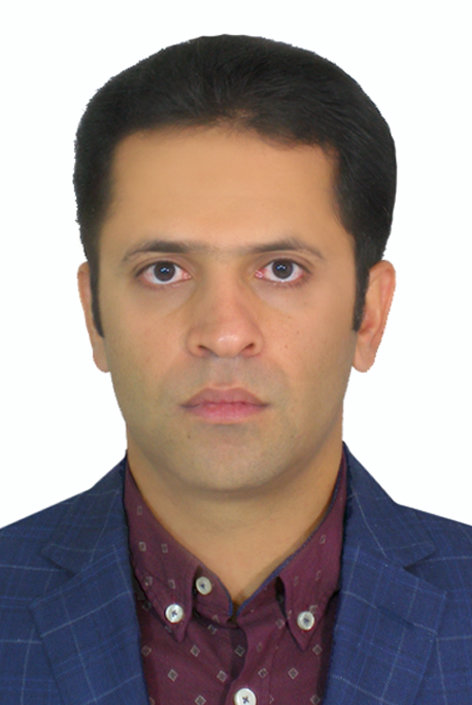}}] {Mohsen Karimzadeh Kiskani} received his bachelors degree in mechanical engineering from Sharif University of Technology in 2008. He got his Masters degree in electrical engineering from Sharif University of Technology in 2010. 
He got a Masters degree in computer science from University of California Santa Cruz in 2016. He is currently a PhD candidate in electrical engineering department at University of California, Santa Cruz. His main areas of interest include wireless communications and information theory. He is also interested in the complexity study of Constraint Satisfaction Problems (CSP) in computer science.
\end{IEEEbiography}
\begin{IEEEbiography}
[{\includegraphics[width=1in,height=1in,clip,keepaspectratio]{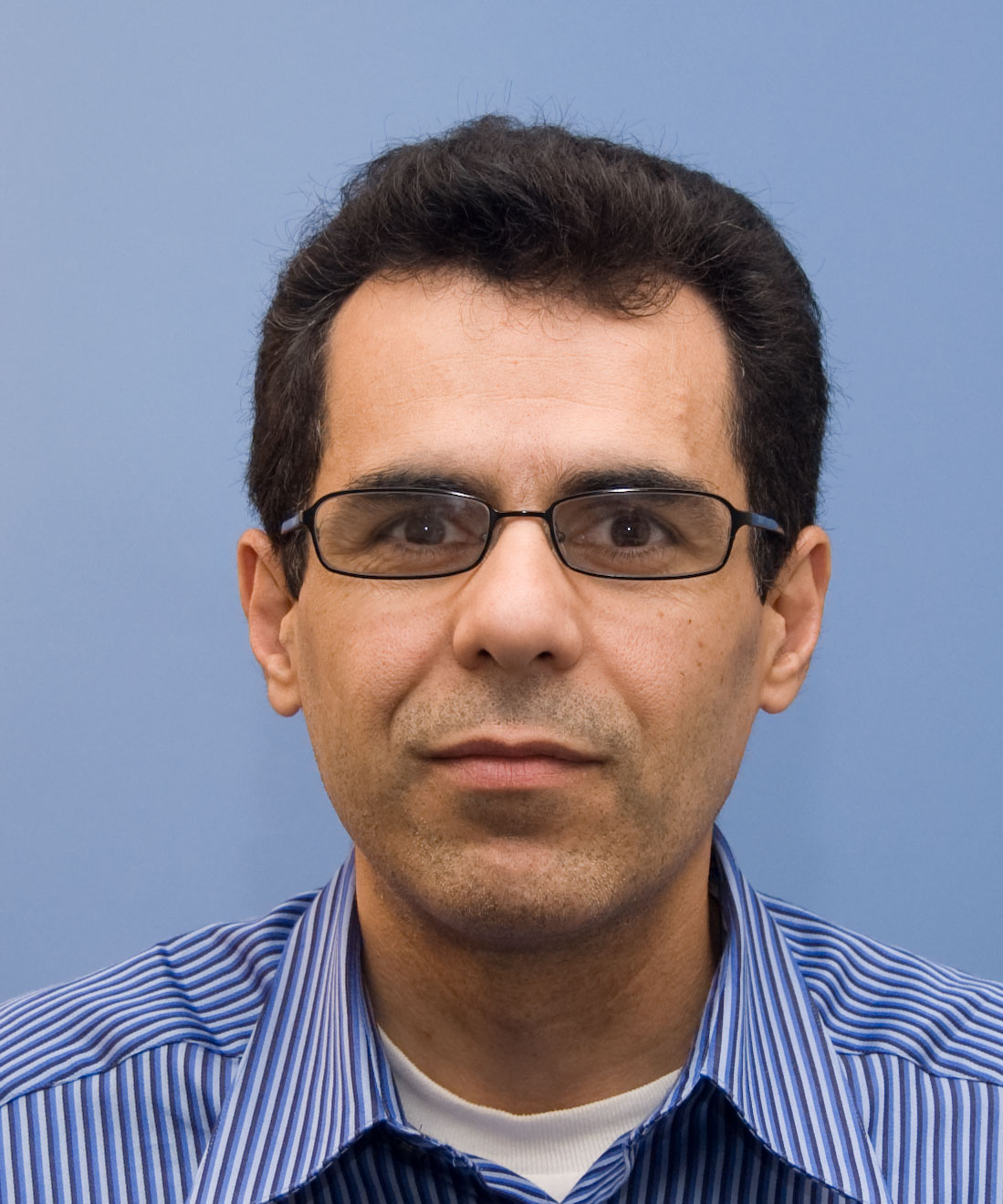}}] {Hamid Sadjadpour}
(S’94–M’95–SM’00) received the B.S. and M.S. degrees from the Sharif University of Technology, and the Ph.D. degree from the University of Southern California at Los Angeles, Los Angeles, CA. In 1995, he joined
the AT\&T Research Laboratory, Florham Park, NJ, USA, as a Technical Staff Member and later as a Principal
Member of Technical Staff. In 2001, he joined the University of California at Santa Cruz, Santa Cruz,
where he is currently a Professor. He has authored over 170 publications. He holds 17 patents. His research interests are in the general areas of wireless communications and networks. He has served as a Technical Program Committee Member and the Chair in numerous conferences. He is a co-recipient of the best paper awards at the 2007 International Symposium on Performance Evaluation of Computer and Telecommunication Systems and the 2008 Military Communications conference, and the 2010 European Wireless Conference Best Student Paper Award. He has been a Guest Editor of EURASIP in 2003 and 2006. He was a member of the Editorial Board of Wireless Communications and Mobile Computing Journal (Wiley), and the  Journal Of Communications and Networks.
\end{IEEEbiography}

\end{document}